\documentclass[submission,copyright,creativecommons]{eptcs}
 % Name of the event you are submitting to
%\usepackage{breakurl}             % Not needed if you use pdflatex only.

\usepackage{negotiations}

\usepackage{graphicx}
\usepackage{listings}

\usepackage{rotating}

\usepackage{enumerate}

\usepackage{amsthm}
\newtheorem{theorem}{Theorem}[section]
\newtheorem{definition}[theorem]{Definition}

\newtheorem{lemma}[theorem]{Lemma}
\newtheorem{corollary}[theorem]{Corollary}

%%%%%%%%%%%%%%%%%% reftheorem

\newtheorem{theoremDummy}{Theorem}
\newenvironment{reftheorem}[1]
{\begin{theoremDummy}}
{\end{theoremDummy}}

\newtheorem{corollaryDummy}{Theorem}
\newenvironment{refcorollary}[1]
{\begin{corollaryDummy}}
{\end{corollaryDummy}}

%%%%%%%%%%%%%%%%%%%%%%%%%%%%%%%%%%%%%%%%%%

\pagestyle{headings}

\begin{document}

\title{Negotiation Games}
\author{Philipp Hoffmann
\institute{Fakult\"at f\"{u}r Informatik\\
Technische Universit\"{a}t M\"{u}nchen\\
Germany}
\email{ph.hoffmann@tum.de}
}
\newcommand{\titlerunning}{Negotiation Games}
\newcommand{\authorrunning}{Philipp Hoffmann}
\maketitle

\begin{abstract}
Negotiations, a model of concurrency with multi party negotiation as primitive,
have been recently introduced in \cite{negI,negII}. We initiate the study of games for 
this model. We study {\em coalition problems}: can a given coalition of agents 
force that a negotiation terminates (resp. block the negotiation so that it goes 
on forever)?; can the coalition force a given outcome of the negotiation? We show
that for arbitrary negotiations the problems are EXPTIME-complete. 
Then we show that for sound and deterministic or even weakly deterministic
negotiations the problems can be solved in PTIME. Notice that the 
input of the problems is a negotiation, which can be exponentially more compact 
than its state space.
\end{abstract}

\section{Introduction}

In \cite{negI,negII}, the first author and J\"org Desel have introduced 
a model of concurrency with multi party negotiation as primitive. The model
allows one to describe distributed negotiations obtained by combining
``atomic'' multi party negotiations, or \emph{atoms}. Each atom has a number 
of \emph{parties} (the subset of agents involved), and a set  
of possible outcomes. The parties agree on an outcome, which
determines for each party the subset of atoms 
it is ready to engage in next.

Ill-designed negotiations may deadlock, or may contains useless 
atoms, i.e., atoms that can never be executed. The problem whether a 
negotiation is well designed or \emph{sound} was studied in \cite{negI,negII}. 
The main result was the identification of two classes,
called deterministic and acyclic weakly deterministic negotiations, for which the 
soundness problem is tractable: while the problem is PSPACE-complete for arbitrary 
negotiations, it becomes polynomial for these two classes.

In this paper we start the study of games on negotiations. As for games
played on pushdown automata \cite{DBLP:journals/iandc/Walukiewicz01},
vector addition systems with states (VASS) \cite{DBLP:conf/icalp/BrazdilJK10}, counter
machines \cite{DBLP:conf/rp/Kucera12}, or
asynchronous automata \cite{DBLP:conf/fsttcs/MohalikW03}, games on negotiations
can be translated into games played on the (reachable part of the) 
state space. However, the number of states of a negotiation may grow exponentially 
in the number of agents, and so the state space can be exponentially 
larger than the negotiation. We explore the complexity of solving games 
\emph{in the size of the negotiation}, not on the size of the state space. 
In particular, we are interested in finding negotiation classes for which the 
winner can be decided in polynomial time, thus solving the state space explosion problem. 

We study games formalizing the two most interesting questions related to a 
negotiation. First, can a given \emph{coalition} (i.e., a given subset of agents) 
force termination of the negotiation? (Negotiations may contain cycles.) Second, 
can the coalition force a given final outcome?

Our first results show that these two problems are EXPTIME-complete in the 
size of the negotiation. This is the case even if the negotiation is 
deterministic, and so it seems as if the tractability results of 
\cite{negI,negII} cannot be extended to games. But then, we are able to show that, very 
surprisingly, the problems are polynomial for deterministic (or even weakly 
deterministic) negotiations \emph{that are sound}. This is very satisfactory: 
since unsound negotiations are ill-designed, we are not interested in them anyway.
And, very unexpectedly, the restriction to sound negotiations has as collateral effect a
dramatic improvement in the complexity of the problem. Moreover, 
the restriction comes ``at no cost'', because deciding soundness
of deterministic negotiations is also decidable in polynomial time.

\paragraph*{Related work.}
Our games can be seen as special cases of concurrent games 
\cite{DBLP:journals/jacm/AlurHK02,DBLP:journals/tcs/AlfaroHK07} in which 
the arena is succinctly represented as a negotiation. Explicit construction of the 
arena and application of the algorithms of 
\cite{DBLP:journals/jacm/AlurHK02,DBLP:journals/tcs/AlfaroHK07} yields an exponential algorithm, 
while we provide a polynomial one. 

Negotiations have the same expressive power as 1-safe Petri nets or 1-safe VASS, 
although they can be exponentially more compact (see \cite{negI,negII}). Games for
unrestricted VASS have been studied in \cite{DBLP:conf/icalp/BrazdilJK10}. 
However, in \cite{DBLP:conf/icalp/BrazdilJK10} the emphasis is on VASS with an 
infinite state space, while we concentrate on the 1-safe case.

The papers closer to ours are those studying games on asynchronous automata
(see e.g. \cite{DBLP:conf/fsttcs/MohalikW03,DBLP:journals/fmsd/GastinSZ09,DBLP:conf/icalp/GenestGMW13}). Like negotiations, asynchronous automata are a model of distributed computation
with a finite state space. These papers study algorithms for deciding the existence of 
distributed strategies for a game, i.e., local strategies for each agent based only on 
the information the agent has on the global system. Our results identify a special case 
with much lower complexity than the general one, in which local strategies are even memoryless.

Finally, economists have studied mathematical models of negotiation games,
but with different goals and techniques (see e.g. \cite{rubinstein1982perfect}). 
In our terminology, they typically consider negotiations 
in which all agents participate in all atomic negotiations. We focus on distributed
negotiations, where in particular atomic negotiations involving disjoint sets
of agents may occur concurrently.

\section{Negotiations: Syntax and Semantics}
Negotiations are introduced in \cite{negI}. We recall the main definitions.
We fix a finite set $\agents$ of \emph{agents} representing potential parties 
of negotiations. In \cite{negI,negII} each agent has an associated set of 
internal states. For the purpose of this paper the internal states are irrelevant, and so we omit them. 

\smallskip
\noindent \textbf{Atoms.} A \emph{negotiation atom}, or just an \emph{atom}, is a pair
$n=(P_n, \outc_n)$, where $P_n \subseteq \agents$ is a nonempty set of \emph{parties}, 
and $\outc_n$ is a finite, nonempty set of \emph{outcomes}.

\smallskip
\noindent \textbf{(Distributed) Negotiations.} A distributed negotiation is
a set of atoms together with a \emph{transition function} $\trans$ that assigns to every 
triple $(n,a,r)$ consisting of an atom $n$, a party $a$ of $n$, and an outcome $r$ of $n$ a set 
$\trans(n,a,r)$ of atoms. Intuitively, this is the set of atomic negotiations 
agent $a$ is ready to engage in after the atom $n$, if the outcome of $n$ is $r$. 

Formally, given a finite set of atoms $N$, let $T(N)$ denote the set of triples $(n, a, r)$ such that $n \in N$, $a\in P_n$, and $r \in \outc_n$. 
A \emph{negotiation} is a tuple $\N=(N, n_0, n_f, \trans)$, where 
$n_0, n_f \in N$ are the \emph{initial} and \emph{final} atoms, and  
$\trans \colon T(N) \rightarrow 2^N$ is the \emph{transition function}. Further, 
$\N$ satisfies the following properties: 
(1) every agent of $\agents$ participates in both $n_0$ and $n_f$; 
(2) for every $(n, a, r) \in T(N)$: $\trans(n, a, r)= \emptyset$ if{}f $n=n_f$.

\begin{figure}[bth]
\centering
\scalebox{0.8}{
\input{tikz/FDM_acyclic}
\hspace{1cm}
\input{tikz/FDM_cyclic}
}
\caption{An acyclic and a cyclic negotiation.}
\label{fig:dneg}
\end{figure}

\noindent \textbf{Graphical representation.} Negotiations are graphically represented as shown in Figure \ref{fig:dneg}. 
For each atom $n \in N$ we draw a black bar; for each party $a$ of $P_n$ we 
draw a white circle on the bar, called a \emph{port}. For each $(n,a,r) \in T(N)$, 
we draw a hyper-arc leading from the port of $a$ in $n$ to all the ports of $a$ in 
the atoms of $\trans(n,a,r)$, and label it by $r$. 
Figure \ref{fig:dneg} shows two Father-Daughter-Mother negotiations. On the left,
Daughter and Father negotiate with possible outcomes \texttt{yes} ($\texttt{y}$), 
\texttt{no} ($\texttt{n}$),
and \texttt{ask\_mother} ($\texttt{am}$). If the outcome is the latter, then Daughter and Mother negotiate
with outcomes \texttt{yes}, \texttt{no}. In the negotiation on the right,
Father, Daughter and Mother negotiate with outcomes \texttt{yes} and \texttt{no}.
If the outcome is \texttt{yes}, then Father and Daughter negotiate a return time  
(atom $n_1$) and propose it to Mother (atom $n_2$). 
If Mother approves (outcome \texttt{yes}), then the negotiation terminates, 
otherwise (outcome \texttt{r}) Daughter and Father renegotiate the return time. 

\smallskip
\noindent \textbf{Semantics. } A \emph{marking} of a negotiation $\N=(N, n_0, n_f, \trans)$ is a mapping 
$\vx \colon \agents \rightarrow 2^N$. Intuitively, $\vx(a)$ is the set of atoms that agent $a$ is currently ready to engage in next. 
The \emph{initial} and \emph{final} markings,  denoted by $\vx_0$ and $\vx_f$ respectively, are given by $\vx_0(a)=\{n_0\}$ and 
$\vx_f(a)=\emptyset$ for every $a \in \agents$. 

A marking $\vx$ \emph{enables} an atom $n$ if $n \in \vx(a)$ for every $a \in P_n$,
i.e., if every party of $n$ is currently ready to engage in it.
If $\vx$ enables $n$, then $n$ can take place and its parties
agree on an outcome $r$; we say that $(n,r)$ \emph{occurs}.
Abusing language, we will call this pair also an outcome.
The occurrence of $(n,r)$ produces a next marking $\vx'$ given by $\vx'(a) = \trans(n,a,r)$ for every $a \in P_n$, 
and $\vx'(a)=\vx(a)$ for every $a \in \agents \setminus P_n$. 
We write $\vx \by{(n,r)} \vx'$ to denote this. 

By this definition, $\vx (a)$ is always either $\{n_0\}$ or equals 
$\trans(n,a,r)$ for some atom $n$ and outcome $r$. 
The marking $\vx_f$ can only be reached by the occurrence of 
$(n_f, r)$ ($r$ being a possible outcome of $n_f$), 
and it does not enable any atom. Any other marking that does not enable 
any atom is a \emph{deadlock}.

Reachable markings are graphically represented by placing tokens (black dots) on the forking points of the hyper-arcs (or in the middle of an arc). Figure \ref{fig:dneg} shows on the right a marking in which \texttt{F} and \texttt{D} are ready to engaging $n_1$ and \texttt{M} is ready to engage in $n_2$.

We write $\vx_1 \by{\sigma}$ to denote that there is a sequence 
\begin{equation*}
\vx_1 \by{(n_1,r_1)} \vx_2 \by{(n_2,r_2)}\cdots \by{(n_{k-1},r_{k-1})} \vx_{k} \by{(n_k,r_k)} \vx_{k+1} \cdots
\end{equation*}
such that  $\sigma = (n_1, r_1) \ldots (n_{k}, r_{k}) \ldots$. If
$\vx_1 \by{\sigma}$, then $\sigma$ is an \emph{occurrence sequence} from the marking $\vx_1$, and $\vx_1$ enables $\sigma$.
If $\sigma$ is finite, then we write
$\vx_1 \by{\sigma} \vx_{k+1}$ and say that $\vx_{k+1}$ is \emph{reachable} from $\vx_1$. 

\smallskip
\noindent \textbf{Soundness. } A negotiation is \emph{sound} if \emph{(a)} every atom is enabled at some reachable marking, and \emph{(b)} every occurrence sequence from the initial marking 
either leads to the final marking $\vx_f$, or can be extended to an
occurrence sequence that leads to $\vx_f$.  

The negotiations of Figure \ref{fig:dneg} are sound. However, if we set in the left negotiation
$\trans(n_0,\texttt{M}, \texttt{st})= \{n_2\}$ instead of $\trans(n_0,\texttt{M}, \texttt{st})= \{n_2, n_f\}$, then the occurrence sequence $(n_0,\texttt{st}) (n_1, \texttt{yes})$
leads to a deadlock.

\smallskip
\noindent \textbf{Determinism and weak determinism. } An agent $a \in \agents$ is \emph{deterministic} if for every $(n,a,r) \in T(N)$ such that $n \neq n_f$ there exists an atom
$n'$ such that $\trans(n,a,r) = \{n'\}$. 
    
The negotiation $\N$  is \emph{weakly deterministic} if for every $(n,a,r) \in T(N)$ there is a deterministic
agent $b$ that is a party of every atom in $\trans(n,a,r)$, i.e., $b \in P_{n'}$ for every $n' \in \trans(n,a,r)$.
In particular, every reachable atom has a deterministic party.
It is \emph{deterministic} if all its agents are deterministic.

Graphically, an agent $a$ is deterministic if no proper hyper-arc leaves any port 
of $a$, and a negotiation is deterministic if there are no proper hyper-arcs.
The negotiation on the left of Figure \ref{fig:dneg} is not deterministic (it contains a proper hyper-arc for Mother), while the one on the right is deterministic. 

\section{Games on Negotiations}
\label{sec:games}

We study a setting that includes, as a special case, the questions about coalitions mentioned
in the introduction: Can a given coalition (subset of agents) force termination of the negotiation?
Can the coalition force a given concluding outcome?

In many negotiations, there are reachable markings that enable more than one atom. If two of those atoms share an agent, the occurrence of one might disable the other and they are not truly concurrent. For a game where we want to allow concurrent moves, we formalize this concept with the notion of an independent set of atoms.

\begin{definition}
A set of atoms $S$ is independent if no two distinct atoms of $S$ share an agent, i.e., $P_n \cap P_{n'} = \emptyset$
for every $n, n' \in S$, $n \neq n'$.
\end{definition}
It follows immediately from the semantics that if a marking $\vx$ enables all atoms of $S$ and we fix an outcome $\r_i$ for each $n_i \in S$,
then there is a unique marking $\vx'$ such that $\vx \by{\sigma} \vx'$ for every 
sequence $\sigma = (n_1,\r_1) \ldots (n_k,\r_k)$ such that each atom of $S$ appears exactly once in
$\sigma$. In other words, $\vx'$ depends only on the outcomes of the atoms, and not on the order in
which they occur.

A \emph{negotiation arena} is a negotiation whose set $N$ of atoms 
is partitioned into two sets $N_1$ and $N_2$. We consider  
concurrent games \cite{DBLP:journals/jacm/AlurHK02,DBLP:journals/tcs/AlfaroHK07} with three 
players called Player 1, Player 2, and Scheduler. At each step, Scheduler chooses a nonempty set
of independent atoms among the atoms enabled at the current marking of the negotiation arena. 
Then, Player 1 and Player 2, independently of each other, select an outcome 
for each atom in $S\cap N_1$ and $S\cap N_2$, respectively. Finally, 
these outcomes occur in any order, and the game moves to the unique marking 
$\vx'$ mentioned above. The game terminates if it reaches a marking enabling no atoms,
otherwise it continues forever. 

Formally, a partial play is a sequence of tuples $(S_i, F_{i,1}, F_{i,2})$ 
where each $S_i \subseteq N$ is a set of independent atoms and $F_{i,j}$ assigns to every 
$n \in S_i\cap N_j$ an outcome $\r \in R_n$. Furthermore it must hold that every atom $n \in S_i$ 
is enabled after all atoms in $S_0,...,S_{i-1}$ have occurred with the outcomes specified by $F_{0,1}, F_{0,2},...,F_{i-1,1}, F_{i-1,2}$. 
A play is a partial play that is either infinite or reaches a marking enabling no atoms.
For a play $\pi$ we denote by $\pi_i$ the partial play consisting of the first $i$ tuples of $\pi$.

We consider two different winning conditions. In the \emph{termination game}, 
Player 1 wins a play if the play ends with $n_f$ occurring, otherwise Player 2 wins. In the \emph{concluding-outcome game},
we select for each agent $a$ a set of outcomes $G_a$ such that $n_f \in \trans(n,a,r)$ for $r \in G_a$ (that is, after any outcome $\r \in G_a$, agent $a$ is ready to terminate). Player 1 wins if the the play ends with $n_f$ occurring, and for each agent $a$ the last outcome $(n, \r)$ of the play such that $a$ is a party of $n$ belongs to $G_a$. 

A strategy $\sigma$ for Player $j, j\in\{1, 2\}$ is a partial function that, given a partial play 
$\pi = (S_0, F_{0,1},F_{0,2}),..,(S_i, F_{i,1},F_{i,2})$ and a set of atoms $S_{i+1}$ returns 
a function $F_{i+1,j}$ 
according to the constraints above. A play $\pi$ is said to be played according to a 
strategy $\sigma$ of Player $j$ 
if for all $i$, $\sigma(\pi_i, S_{i+1}) = F_{i+1,j}$. A strategy $\sigma$ is a winning 
strategy for Player $j$ if he 
wins every play that is played according to $\sigma$. Player $j$ is said to win
 the game if he has a winning strategy. Notice that if Player 1 has a winning strategy 
then he wins every play against any pair of strategies for Player 2 and Scheduler. 

\begin{definition}
Let $\N$ be a negotiation arena.
The termination (resp. concluding-outcome) problem for $\N$ consists of deciding whether Player 1 has a winning strategy for the termination game (concluding-outcome game).
\end{definition}

\begin{figure}[h]
\centering
\scalebox{1.1}{
\input{tikz/FDDM.tex}
}
\caption{Atom control and determinism}
\label{fddm}
\end{figure}

Assume we want to model the following situation: In a family with Father (F), Mother (M) and two Daughters (D$_1$ and D$_2$), Daughter D$_1$ wants to go to a party. She can talk to each parent individually, but can choose beforehand whether to take her sister D$_2$ with her or not. Figure \ref{fddm} models this negotiation. 
The solid edges for the daughters between $n_2$ and $n_3$ and
between $n_4$ and $n_5$ ``ask the other parent'' outcomes, while the dashed edges 
represent the ``yes'' and ``no'' outcomes. Assume the daughters work together to reach termination.
Then in $N_1 = \{ n_1, n_2, n_3 \}$ the daughters have 
a majority which we will interpret as ``they can choose which outcome is taken''. Can the daughters force termination?

At atom $n_1$ the daughters decide whether D$_2$ should participate in the conversation with the parents (outcome \texttt{t}) or not (outcome \texttt{s}). 
If the daughters choose outcome \texttt{s},
then Father and Mother can force an infinite loop between $n_4$ and $n_5$. On the contrary, if the daughters choose to stay together, then, 
since they control atom $n_2$, they can force a  ``yes'' or ``no'' outcome, and therefore termination. 

The questions whether a coalition ${\cal C}$ of agents can force termination or a 
certain outcome are special instances of the termination and concluding-outcome problems. 
In these instances, an atom $n$ belongs to $N_1$---the set 
of atoms controlled by Player 1--- if{}f a strict majority of the agents of $n$ 
are members of ${\cal C}$. 

\subsection{Coalitions}

Before we turn to the termination and concluding-outcome problems, we briefly study coalitions. Intuitively, a coalition controls all the atoms where it has strict majority. We show that while the definition of the partition of the atoms $N$ according to the participating agents may seem restricting, this is not the case: In all cases but the deterministic sound case, any partition can be reached, possibly by adding agents.

We define the partition of $N$ via a partition of the agents: Let the agents $A$ be partitioned into two sets $A_1$ and $A_2$. Define $N_1 = \{ n \in N : |P_n \cap A_1| > |P_n \cap A_2|\}$, $N_2 = N \backslash N_1$. Note that ties are controlled by $A_2$.

We first show that in the nondeterministic and weakly deterministic case, this definition is equivalent to one where we decide control for each atom and not for each agent.

\begin{figure}[h]
\centering
\scalebox{0.8}{
\input{tikz/det_control.tex}
\tikz{
\nego[text={$A$}{$B$}{$a$},id=n0]{0,0}
\node[left = 0cm of n0_P0, font=\large] {$n_0$};
\nego[text={$A$}{$B$}{$a$},id=n1]{0,-1.5}
\node[left = 0cm of n1_P0, font=\large] {$n_1$};
\nego[text={$A$}{$B$},id=n2]{0,-3}
\node[left = 0.3cm of n2_P0, font=\large] {$n_2$};
\nego[text={$A$}{$B$}{$a$},id=nf]{0,-4.5}
\node[left = 0cm of nf_P0, font=\large] {$n_f$};
\pgfsetarrowsend{latex}
\draw (n0_P0) -- (n1_P0);
\draw (n0_P1) -- (n1_P1);
\draw (n0_P2) -- (n1_P2);
\draw (n1_P0) to[out=-110,in=110] (n2_P0);
\draw (n1_P1) to[out=-110,in=110] (n2_P1);
\draw (n1_P0) to[out=-120,in=120] (nf_P0);
\draw (n1_P1) to[out=-60,in=60] (nf_P1);
\draw (n1_P2) -- (nf_P2);
\draw (n1_P2) + (0,-1) to [out=0,in=0,out looseness = 2,in looseness=2] (n1_P2);
\draw (n2_P0) to[out=70,in=-70] (n1_P0);
\draw (n2_P1) to[out=70,in=-70] (n1_P1);
\draw (n2_P0) -- (nf_P0);
\draw (n2_P1) -- (nf_P1);
\labelEdge[start=n1_P2,end=nf_P2,dist=0.65,shift={0.25,0}]{a,b}
}
\tikz{
\nego[text={$A$}{$B$}{$a$}{$b$},id=n0]{0,0}
\node[left = 0cm of n0_P0, font=\large] {$n_0$};
\nego[text={$A$}{$B$}{$a$},id=n1]{0,-1.5}
\node[left = 0cm of n1_P0, font=\large] {$n_1$};
\nego[text={$A$}{$B$},id=n2]{0,-3}
\node[left = 0.3cm of n2_P0, font=\large] {$n_2$};
\nego[text={$A$}{$B$}{$a$}{$b$},id=nf]{0,-4.5}
\node[left = 0cm of nf_P0, font=\large] {$n_f$};
\pgfsetarrowsend{latex}
\draw (n0_P0) -- (n1_P0);
\draw (n0_P1) -- (n1_P1);
\draw (n0_P2) -- (n1_P2);
\draw (n0_P3) -- (nf_P3);
\draw (n0_P3) + (0,-1) to [out=0,in=0,out looseness = 2,in looseness=2] (n0_P3);
\draw (n1_P0) to[out=-110,in=110] (n2_P0);
\draw (n1_P1) to[out=-110,in=110] (n2_P1);
\draw (n1_P0) to[out=-120,in=120] (nf_P0);
\draw (n1_P1) to[out=-60,in=60] (nf_P1);
\draw (n1_P2) -- (nf_P2);
\draw (n1_P2) + (0,-1) to [out=0,in=0,out looseness = 2,in looseness=2] (n1_P2);
\draw (n2_P0) to[out=70,in=-70] (n1_P0);
\draw (n2_P1) to[out=70,in=-70] (n1_P1);
\draw (n2_P0) -- (nf_P0);
\draw (n2_P1) -- (nf_P1);
\labelEdge[start=n0_P3,end=nf_P3,dist=0.75,shift={-0.15,0}]{a}
}
}
\caption{Atom control via additional agents}
\label{fig:control}
\end{figure}

Consider the example given in Figure \ref{fig:control}. On the left a deterministic negotiation with two agents is given. Assume the coalitions are $A_1 = \{A\}$ and $A_2 = \{B\}$. By the definition above, $N_2 = N$, thus coalition $A_2$ controls every atom. We want to change control of $n_1$ so that $A_1$ controls $n_1$, changing the negotiation to a weakly deterministic one on the way. We add an additional agent $a$ that participates in $n_0, n_1, n_f$ as shown in Figure \ref{fig:control} in the middle and set $A_1 = \{A,a\}$. Now $A_1$ controls $n_1$ but also $n_0$ and $n_f$. We therefore add another agent $b$ that participates in $n_0,n_f$ as shown in Figure \ref{fig:control} on the right. Now the partition of atoms is exactly $N_1 = \{n_1\}$ and $N_2 = \{n_0,n_2,n_f\}$, as desired.
In general, by adding nondeterministic agents to the negotiation, we can change the control for each atom individually. For each atom $n$ whose control we wish to change, we add a number of agents to that atom, the initial atom $n_0$ and final atom $n_f$. We add nondeterministic edges for these agent from $n_0$ to $\{n, n_f\}$ for each outcome of $n_0$ and from  $n$ to $\{n, n_f\}$ for each outcome of $n$. It may be necessary to add more agents to $n_0$ that move to $n_f$ in order to preserve the control of $n_0$ or $n_f$. This procedure changes the control of $n$ while preserving soundness and weak determinism.

We proceed by showing that in the deterministic case, we cannot generate any atom control by adding more agents.

\begin{lemma}
We cannot add deterministic agents to the negotiation on the loft of Figure \ref{fig:control} in a manner, such that soundness is preserved and Player 1 controls $n_1$, Player 2 controls $n_2$.
\end{lemma}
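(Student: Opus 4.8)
The plan is to reduce the lemma to a cleaner structural fact and prove that instead: in \emph{every} sound deterministic negotiation $\N'$ obtained from the negotiation $\N$ on the left of Figure~\ref{fig:control} by adding new deterministic agents to its atoms, one has $P_{n_1}=P_{n_2}$. This suffices, since the partition $(N_1',N_2')$ induced by a partition $\agents'=A_1\sqcup A_2$ puts an atom $n$ in $N_1'$ according only to the sign of $|P_n\cap A_1|-|P_n\cap A_2|$; so $P_{n_1}=P_{n_2}$ forces $n_1\in N_1'\Leftrightarrow n_2\in N_1'$, and no coalition can have Player~1 control $n_1$ while Player~2 controls $n_2$. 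The proof of the structural fact rests on one invariant: since $A$ and $B$ start together at $n_0$, take part only in $n_0,n_1,n_2,n_f$, and keep on those atoms their original transitions (which move $A$ and $B$ to the same atom under every outcome), every reachable marking $\vx$ of $\N'$ satisfies $\vx(A)=\vx(B)\in\{\{n_0\},\{n_1\},\{n_2\},\{n_f\}\}$, evolving exactly as in the original two-agent negotiation; in particular, once $A$ and $B$ are at $n_f$ they never move again, and then --- as $\N'$ has the same atoms as $\N$, all of which have $A$ or $B$ among their parties --- no atom can fire except a single concluding firing of $n_f$.

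Assume for contradiction $P_{n_1}\ne P_{n_2}$. Both contain $\{A,B\}$ and any further party is a new agent, so some new agent $c$ lies in exactly one of them; by symmetry take $c\in P_{n_1}\setminus P_{n_2}$. As $c$ is deterministic and no atoms were added, $\trans'(n_1,c,\texttt{a})=\{m\}$ for a single atom $m$ (here $\texttt{a}$ is the outcome of $n_1$ that sends $A,B$ to $n_2$), and $m\in\{n_0,n_1,n_2,n_f\}$; I rule out each value, in every case by reaching from $\vx_0$ a marking from which $\vx_f$ is unreachable, contradicting soundness. Using soundness, reach a marking that enables $n_1$ --- where $A,B$ and $c$ are all at $n_1$ --- and fire $(n_1,\texttt{a})$, which moves $A,B$ to $n_2$ and $c$ to $m$. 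If $m=n_0$: $n_0$ can never fire again, so $c$'s marking $\{n_0\}$ is frozen. If $m=n_2$: $c\notin P_{n_2}$, so no firing of $n_2$ changes $c$'s marking, which stays $\{n_2\}$ forever. If $m=n_1$: at the current marking only $n_2$ can possibly be enabled (every other atom needs $A$ or $B$, who are at $n_2$), so either we are already deadlocked, or firing the exit outcome $\texttt{b}$ of $n_2$ puts $A,B$ at $n_f$, stranding $c$ at the now-dead $n_1$. If $m=n_f$: again only $n_2$ can be enabled, and firing its loop outcome $\texttt{a}$ returns $A,B$ to $n_1$, after which $n_1$ is dead (it needs $c$, who is at $n_f$) and $n_f$ is dead (it needs $A,B$, who are at $n_1$) --- a deadlock.

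The case $c\in P_{n_2}\setminus P_{n_1}$ is analogous via $\trans'(n_0,c,\texttt{a})=\{m_0\}$ (the outcome $\texttt{a}$ is the unique outcome of $n_0$, and it sends $A,B$ to $n_1$): $m_0\in\{n_0,n_1\}$ freezes $c$'s marking at an atom it can never leave; $m_0=n_f$ makes $c$ bypass $n_2$ in every run, so $n_2$ is never enabled, contradicting part~(a) of soundness; and for $m_0=n_2$ the adversary reaches a marking that enables $n_1$ with $c$ still waiting at $n_2$, then fires the exit outcome $\texttt{b}$ of $n_1$ (which does not involve $c$), freezing $A,B$ at $n_f$ and stranding $c$ at the now-dead $n_2$. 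Hence no new agent lies in exactly one of $P_{n_1},P_{n_2}$, so $P_{n_1}=P_{n_2}$.

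The step I expect to be the real obstacle is the one sidestepped above by insisting that $\N'$ keep exactly the atoms of $\N$. If ``adding agents'' is read loosely enough to allow private auxiliary atoms for the new agents, then $m$ may be a fresh atom, $c$ genuinely leaves $n_1$, and one must additionally exclude an ``escape route'' along which $c$ reaches $n_f$ through new atoms without ever returning to $n_1$. The plan there is to observe that such a route merely transfers the very same dilemma to the new agents whose atoms would schedule it, and to close the induction with a minimal-counterexample argument --- a sound deterministic $\N'$ with $P_{n_1}\ne P_{n_2}$ and the fewest agents can admit no such route, so a deadlock is forced. Everything else --- the invariant on $A,B$, the four-way case split, and the reduction to $P_{n_1}=P_{n_2}$ --- is routine.
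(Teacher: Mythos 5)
Your proof is correct and follows essentially the same route as the paper's: exploit the determinism of any added agent together with the $n_1$--$n_2$ cycle to force, in every case of where that agent can move, a reachable marking from which $\vx_f$ is unreachable. Your version is in fact more complete than the paper's sketch --- which only tracks the agents added to $n_1$, considers the destinations $n_1$ and $n_f$, and relegates the $n_2$ case to a footnote --- and your reduction to the cleaner claim $P_{n_1}=P_{n_2}$ is a tidy packaging of the same deadlock argument.
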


\begin{proof}
Consider again the deterministic negotiation game on the left of Figure \ref{fig:control}. Assume we have added deterministic agents such that Player 1 controls $n_1$. After the occurring sequence $\vx_0 \by{(n_0,\text{a})} \vx_1 \by{(n_1,\text{a})} \vx_2$, those additional agents have moved deterministically, either to $n_1$ or $n_f$. \footnote{Moving to $n_2$ would change the control there, thus additional agents have to be added to $n_2$, we then can use a similar argument as follows by exchanging the roles of $n_1$ and $n_2$.} 

If any agent remains in $n_1$, choosing outcome $b$ in $n_2$ leads to a deadlock, otherwise, choosing $a$ leads to a deadlock. Thus the negotiation is no longer sound.
\end{proof}

\section{The Termination Problem}
\label{sec:complexity}

We turn to the general complexity of the problem. It is easy to see that
the termination problem can be solved in exponential time.

\begin{theorem}
\label{thm:membership}
The termination problem is in EXPTIME.
\end{theorem}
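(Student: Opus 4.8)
The plan is to reduce the termination game to an explicit concurrent reachability game played on the reachable part of the state space of $\N$, which has exponential size, and then to invoke the polynomial-time algorithm for concurrent games of \cite{DBLP:journals/jacm/AlurHK02,DBLP:journals/tcs/AlfaroHK07}.

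First I would bound and compute the state space. A marking is a map $\vx \colon \agents \to 2^N$, so $\N$ has at most $2^{|N| \cdot |\agents|}$ markings, and the set $R$ of markings reachable from $\vx_0$ can be computed in exponential time by a breadth-first exploration using the step relation $\vx \by{(n,r)} \vx'$. From $R$ I would build a finite arena $\mathcal{A}$ with three kinds of vertices: (i)~every marking of $R$ that enables at least one atom, declared a \emph{Scheduler vertex}; (ii)~an auxiliary \emph{concurrent vertex} $(\vx, S)$ for each such $\vx$ and each nonempty independent set $S$ of atoms enabled at $\vx$; and (iii)~two absorbing vertices $\top$ (Player~1 wins) and $\bot$ (Player~2 wins). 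From a Scheduler vertex $\vx$ the adversarial Scheduler moves to some $(\vx, S)$; from a concurrent vertex $(\vx, S)$, Player~1 picks an outcome for every atom of $S \cap N_1$ while, simultaneously, Player~2 picks an outcome for every atom of $S \cap N_2$, and, by the observation following the definition of independent sets, these choices determine a unique successor marking $\vx'$; the play then proceeds to $\vx'$ if $\vx'$ enables an atom, to $\top$ if $\vx' = \vx_f$, and to $\bot$ if $\vx'$ is a deadlock. Since $\vx_f$ is reachable only through an occurrence of $n_f$ and is the unique non-deadlock marking enabling no atom, a play of the termination game ends with $n_f$ occurring precisely when the corresponding play of $\mathcal{A}$ reaches $\top$. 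Hence the termination game is equivalent to the reachability game on $\mathcal{A}$ with target $\{\top\}$ in which Player~1 plays against the coalition of Scheduler and Player~2; collapsing these two into a single adversary is sound because, by definition, Player~1 wins the termination game iff he has a strategy beating every pair of Scheduler and Player~2 strategies. The numbers of vertices and of moves of $\mathcal{A}$ are at most exponential in the size of $\N$, since there are exponentially many markings, exponentially many sets $S$, and at most $\prod_{n \in N} |\outc_n|$ outcome assignments per set.

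Second I would solve $\mathcal{A}$. Deciding whether Player~1 has a sure winning strategy in a concurrent game with a reachability (indeed any $\omega$-regular) winning condition can be done in time polynomial in the size of the arena \cite{DBLP:journals/jacm/AlurHK02,DBLP:journals/tcs/AlfaroHK07}; for reachability it reduces to computing the least fixpoint, started from $\{\top\}$, of the controllable-predecessor operator that adds a vertex $q$ to the current set $W$ as soon as some move of Player~1 forces, against every move of the opponent, a successor in $W$. This stabilises after at most $|\mathcal{A}|$ iterations, and Player~1 wins the termination game iff $\vx_0 \in W$ at the fixpoint. Running a polynomial-time algorithm on an arena of size exponential in the size of $\N$ costs time exponential in the size of $\N$, which yields the claimed EXPTIME bound.

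The only delicate points are bookkeeping: checking that ``the play ends with $n_f$ occurring'' is faithfully rendered by reaching $\top$ --- which relies on $\vx_f$ being reachable only via $n_f$ and being distinguished from ordinary deadlocks --- and that folding Scheduler and Player~2 into one adversary is legitimate for Player~1's sure-winning problem. The remainder is routine size accounting, so I do not anticipate any real obstacle, in keeping with the remark that membership in EXPTIME ``is easy to see''.
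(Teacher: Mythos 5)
Your proposal is correct and follows essentially the same route as the paper: translate the negotiation into an exponential-size concurrent game structure with Scheduler nodes $\vx$ and concurrent nodes $(\vx,S)$, treat Scheduler and Player~2 as a coalition against Player~1, and invoke the polynomial-time algorithm of Alur--Henzinger--Kupferman for concurrent reachability games on the exponential arena. The only (harmless) cosmetic difference is that you route terminal markings to explicit absorbing vertices $\top$/$\bot$ rather than labelling $\vx_f$ with a proposition.
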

\begin{proof}
Sketch. (See the appendix for details.) We construct a concurrent reachability 
game on a graph such that Player 1 wins the negotiation game if{}f she wins this new game. 
The game has single exponential size in the size of the negotiation arena.  

The nodes of the graph are either
markings $\vx$ of the negotiation, or pairs $(\vx, N_{\vx})$, where $\vx$ is a marking
and $N_{\vx}$ is an independent set of atoms enabled at $\vx$. Nodes $\vx$ belong to Scheduler,
who chooses a set $N_{\vx}$, after which the play moves to $(\vx, N_{\vx})$. At nodes 
$(\vx, N_{\vx})$ Players 1 and 2 concurrently select outcomes for their atoms in $N_{\vx}$, and depending
on their choice the play moves to a new marking. Player 1 wins if the play reaches the 
final marking $\vx_f$. Since the winner of a concurrent game with reachability objectives
played on a graph can be determined in polynomial time (see e.g. \cite{DBLP:journals/jacm/AlurHK02}), 
the result follows.
\end{proof}

Unfortunately, there is a matching lower bound.

\begin{theorem}
\label{thm:hard}
The termination problem is EXPTIME-hard even for negotiations 
in which every reachable marking enables at most one atom.
\end{theorem}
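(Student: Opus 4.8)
The plan is to reduce from the word problem for alternating Turing machines working in polynomial space, which is $\mathrm{EXPTIME}$-complete since $\mathrm{APSPACE}=\mathrm{EXPTIME}$. So fix such a machine $M$, with state set $Q=Q_\exists\uplus Q_\forall$ containing a unique accepting state $q_{\mathrm{acc}}$ and a unique rejecting state $q_{\mathrm{rej}}$, tape alphabet $\Gamma$, and a space bound $p=p(|w|)$ on input $w$; we shall build a negotiation arena $\N$, of size polynomial in $|M|+|w|$, such that Player~1 wins the termination game for $\N$ iff $M$ accepts $w$. The arena has one agent $t_i$ for each tape cell $i\in\{1,\dots,p\}$, plus one extra agent $c$, the \emph{scanner}, which at all times sits on (or is about to move onto) the cell carrying the head. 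The idea is to encode each agent's part of a configuration of $M$ in the set of atoms it is currently ready for: the local state of $t_i$ records the content of cell $i$ together with, when the head is on $i$, the current control state; the local state of $c$ records the cell it is about to visit and, while it carries a just-computed control state to a neighbour, that state. \emph{Every} atom of $\N$ has $c$ among its parties, so the atoms enabled at a marking are always among the few atoms $c$ is ready for, and --- as we arrange --- exactly one of those is also offered by the relevant cell agent. For each triple $(i,\gamma,q)$ we use a \emph{transition atom} $\mathsf{T}_{i,\gamma,q}$ and a \emph{stamp atom} $\mathsf{S}_{i,\gamma,q}$, both with parties $\{c,t_i\}$.

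After $n_0$ --- in which all agents participate, as required --- the agents are moved deterministically into the encoding of $M$'s initial configuration with $c$ on cell $1$. A macro-step then works as follows. When $c$ sits on the head cell $i$, with head state $q$ on symbol $\gamma$, the only enabled atom is $\mathsf{T}_{i,\gamma,q}$; we put it in $N_1$ if $q\in Q_\exists$ and in $N_2$ if $q\in Q_\forall$, with one outcome per applicable instruction $(q',\gamma',D)$ of $M$ (we assume $D\in\{-1,+1\}$). Choosing such an outcome rewrites cell $i$ to $\gamma'$ and leaves $c$ ``heading to cell $i+D$ carrying state $q'$'', i.e.\ ready for the family $\{\mathsf{S}_{i+D,\gamma'',q'}:\gamma''\in\Gamma\}$; the target cell $t_{i+D}$, which is plain with some content $\gamma_0$, is ready for $\{\mathsf{S}_{i+D,\gamma_0,q''}:q''\in Q\}$, so the unique enabled atom is $\mathsf{S}_{i+D,\gamma_0,q'}$, and its single outcome stamps state $q'$ onto cell $i+D$ and leaves $c$ on cell $i+D$ ready for exactly $\mathsf{T}_{i+D,\gamma_0,q'}$, ready for the next macro-step. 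If an instruction enters $q_{\mathrm{acc}}$ its outcome instead sends every agent to $n_f$, so $n_f$ fires and the negotiation terminates; if it enters $q_{\mathrm{rej}}$ (or the current state has no instruction), its outcome sends $c$ into a private two-atom loop $L\leftrightarrow L'$ while all $t_i$ idle, so the play goes on forever.

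Two facts remain to be verified. First, every reachable marking enables exactly one atom: $c$'s local state pins down a single cell and phase, and since every atom has $c$ as a party, only the (one or $|\Gamma|$) atoms in $c$'s ready set can be enabled, of which exactly one is also in the ready set of the relevant $t_i$; hence no two atoms are ever enabled and $\N$ meets the restriction in the statement. Second, the game mirrors $M$: Scheduler never has a real choice, Player~2's choices are precisely the instructions taken from universal states (each of which correctly advances the simulated configuration), the only way to reach $n_f$ is via an accepting instruction, and any play that never reaches $n_f$ is infinite and therefore lost by Player~1. Consequently Player~1 has a winning strategy iff the existential player of $M$ can force acceptance on $w$, i.e.\ iff $M$ accepts $w$; since $\N$ has polynomial size, $\mathrm{EXPTIME}$-hardness follows.

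The real obstacle, I expect, is designing the atom families so that the ``exactly one enabled atom'' invariant never breaks, above all at the head move: $c$ knows the incoming control state but not the target cell's content, while $t_{i+D}$ knows its content but not the incoming state, so the stamp atom must be the sole atom on which the two can synchronise, and one must make sure no transition or stamp atom of another cell is simultaneously live. Pinning down this synchronisation discipline, and checking that the same scheme also handles the very first step after $n_0$, the case $D=-1$, head moves at the ends of the tape and the accept/reject routing --- all while keeping the number of atoms polynomial --- is where the effort goes; the rest is bookkeeping.
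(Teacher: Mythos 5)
Your reduction is essentially the paper's: an alternating space-bounded TM is simulated by encoding each agent's share of the configuration in its ready set, so that the intersection of several nondeterministic ready sets pins down the unique enabled atom; existential states go to $N_1$, universal ones to $N_2$, rejection leads to an infinite loop, and Scheduler is irrelevant because at most one atom is ever enabled. The only structural difference is that the paper keeps the control state and the head position in two separate agents $I$ and $P$ (each atom $n_{q,\alpha,k}$ has the three parties $I,P,C_k$, and the next atom is the unique element of the intersection of the three ready sets), whereas you fuse them into a single scanner $c$ and compensate with an extra ``stamp'' handshake atom per $(i,\gamma,q)$. Both variants are polynomial and both maintain the one-enabled-atom invariant, so this is a cosmetic rather than substantive deviation.

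There is, however, one step that fails as written: you say that an accepting instruction's outcome ``sends every agent to $n_f$.'' The occurrence of an atom only updates the markings of its \emph{parties}, and your transition atoms have parties $\{c,t_i\}$ only; the other tape agents $t_j$ would remain ready solely for their own stamp families, so $n_f$ (in which all agents must participate) would never become enabled and Player~1 could never win. The fix is exactly the device the paper uses: keep $n_f$ permanently in the ready set of every idle cell agent, i.e.\ every outcome that updates $t_j$ puts it ready for $\{\mathsf{S}_{j,\gamma_j,q''}:q''\in Q\}\cup\{n_f\}$ (and $\trans(n_0,t_j,\cdot)$ likewise includes $n_f$). Since every atom has $c$ as a party and $c$ is never ready for $n_f$ before acceptance, this does not disturb your ``exactly one enabled atom'' invariant. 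With that amendment your construction goes through; the remaining items you flag (the first step after $n_0$, moves at the tape ends) are indeed just bookkeeping.
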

\begin{proof}
Sketch. (See the appendix for details.)
The proof is by reduction from the acceptance problem 
for alternating, linearly-bounded Turing machines (TM)\cite{DBLP:journals/jacm/ChandraKS81}. Let 
$M$ be such a TM with transition relation $\delta$, and let $x$ be an input of length $n$. 

We define a negotiation with an agent $I$ modeling the internal state of $M$,
an agent $P$ modeling the position of the head, and one agent $C_i$ 
for each $1 \leq i \leq n$ modeling the $i$-th cell of the tape. 
The set of atoms contains one atom $n_{q,\alpha,k}$ for each triple 
$(q, \alpha, k)$, where $q$ is a state of $M$, $k$ is the current position of the 
head, and $\alpha$ is the current symbol in the $k$-th tape cell. 

The parties of the atom $n_{q, \alpha, k}$ are $I$, $P$ and $C_k$, in particular,
 $I$ and $P$ are agents of all atoms.
The atom $n_{q, \alpha, k}$ has one outcome 
$\r_\tau$ for each element $\tau \in \delta(q,\alpha)$, where $\tau$ 
is a triple consisting of a new state, a new tape symbol, and a direction 
for the head. We informally define the function $\trans(n_{q, \alpha, k},C_k, \r_\tau)$ 
by means of an example. Assume that, for 
instance, $\tau = (q', \beta, R)$, i.e., at control state $q$ and with the head reading 
$\alpha$, the machine can go to control state $q'$, write $\beta$, and move the head to 
the right. Then we have:  (i) $\trans(n_{q, \alpha, k},I,\r_\tau)$ contains all atoms of the form
$n_{q', \_, \_}$ (where $\_$ stands for a wildcard); (ii) $\trans(n_{q, \alpha, k},P,\r)$ 
contains the atoms $n_{\_, \_, k+1}$; and (iii) $\trans(n_{q, \alpha, k},C_k,\r_\tau)$ 
contains all atoms $n_{\_, \beta, \_}$. If atom $n_{q, \alpha, k}$ is 
the only one enabled and the outcome $\r_\tau$ occurs, then clearly in the new marking
the only atom enabled is $n_{q', \beta, k+1}$. So every reachable marking enables at most one atom.

Finally, the negotiation also has an initial atom that, loosely speaking, takes care of modeling the initial configuration.

The partition of the atoms is: an atom $n_{q, \alpha, k}$ belongs to $N_1$ if
$q$ is an existential state of $M$, and to $N_2$ if it is universal. It is easy to see that $M$ accepts $x$
if{}f Player 1 has a winning strategy.
\end{proof}

Notice that if no reachable marking enables two or more atoms, 
Scheduler never has any choice. Therefore, the termination problem is EXPTIME-hard
even if the strategy for Scheduler is fixed. 

A look at points (i)-(iii) in the proof sketch of this theorem shows that 
the negotiations obtained by the reduction are highly nondeterministic. 
In principle we could expect a lower complexity in the deterministic case.
However, this is not the case.

\begin{theorem}
\label{thm:hard2}
The termination problem is EXPTIME-hard even for deterministic negotiations 
in which every reachable marking enables at most one atom.
\end{theorem}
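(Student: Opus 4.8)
The plan is to reduce once more from the acceptance problem for alternating linearly-bounded Turing machines, reusing the skeleton of the proof of Theorem~\ref{thm:hard} but redesigning it so that the negotiation is deterministic. The nondeterminism in that construction lives entirely in points (i)--(iii): after a step, an agent becomes ready for a whole \emph{set} of atoms. In a deterministic negotiation this cannot happen---every $\trans(n,a,r)$ is a singleton, so a reachable marking assigns to each agent exactly one atom. The first design decision exploits precisely this: we keep one distinguished agent $I$ (carrying the control state and the head position of $M$) a party of \emph{every} atom. Then an atom can be enabled only if $I$ points to it, so at most one atom is enabled at every reachable marking---exactly what the statement demands. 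As before, the tape is stored in $n$ cell agents $C_1,\ldots,C_n$, with $C_i$ pointing to one of two atoms $h_i^0,h_i^1$ that encode the current content of cell $i$.

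The heart of the matter is how a deterministic negotiation can \emph{read} the bit stored in $C_k$, and dually \emph{write} a new bit into it: the natural rendezvous atom would have to be pointed at simultaneously by $I$ and by $C_k$, yet $I$ does not know that bit. I would resolve this with a ``guess-and-verify'' gadget. To read cell $k$, $I$ first visits a one-party atom in $N_1$ with outcomes $0$ and $1$; Player~1 guesses the bit, and $I$ moves to the corresponding $h_k^b$. If the guess is wrong then $C_k$ still points to $h_k^{1-b}$, so $h_k^b$ is disabled, and since $I$ (being in every atom) points nowhere else, the play deadlocks and Player~2 wins. Hence Player~1, who has perfect information and never profits from a deadlock, is forced to report the true bit; symmetrically, once the transition has been fixed Player~1 is forced---on pain of a deadlock, this time verified against $I$, which now knows the symbol to be written---to update $C_k$ to the correct value. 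The transition step itself is a single atom $s_{q,k,b}$ whose outcomes are the elements of $\delta(q,b)$, placed in $N_1$ if $q$ is existential and in $N_2$ if it is universal, so that the alternation of $M$ is reflected in the partition of the atoms. One technical point: an atom shared with a cell agent cannot also be indexed by $q$ (a cell agent must never ``know'' the control state), so $I$ temporarily forgets $q$ during such a transaction; I would remedy this with a second controller agent $I'$ that parks at a $q$-indexed atom during the transaction, from which $I$ recovers $q$ by one further guess-and-verify step. The initial atom $n_0$ installs the initial configuration; when $M$ enters an accepting state a cleanup phase routes $I'$ and all cell agents to $n_f$; a rejecting state is sent to a self-loop.

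The remaining bookkeeping---carrying the control state through the cell transactions, re-synchronising $I$ and $I'$, and handling halting and cleanup---is routine but tedious, and I would relegate it to an appendix. The resulting negotiation is deterministic, has polynomially many atoms (each indexed by a control state, a head position, a tape symbol, and a bounded phase), and every reachable marking enables at most one atom. Any play in which Player~1 guesses truthfully follows exactly one computation of $M$ under Player~2's universal choices, and a deadlock is reachable only through a false guess by Player~1; hence Player~1 can force $n_f$ to occur iff $M$ accepts $x$. The step I expect to be the real obstacle---and the reason the deterministic case is not a black-box specialisation of Theorem~\ref{thm:hard}---is exactly this: a deterministic negotiation offers no primitive for one agent to inspect another agent's state, so every such inspection has to be emulated by a forced-honesty gadget whose correctness rests on Player~1 always preferring termination to deadlock.
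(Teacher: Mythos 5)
Your overall route is the one the paper takes: reduce from acceptance of alternating linearly-bounded Turing machines, keep one agent that participates in every atom so that at most one atom is ever enabled, and determinize by letting Player~1 \emph{guess} information that an agent cannot yet know deterministically, with a wrong guess punished by a deadlock that Player~1 (having perfect information) never needs to incur. The paper's appendix realizes exactly this with agents $I$ (state), $P$ (position, a party of every atom) and $C_k$, and two guess atoms per head move: $n_{q,k}$, where Player~1 guesses the symbol under the head, and $n_{\alpha,k}$, where Player~1 guesses the state with which the head arrives so that the waiting cell agent can rejoin. Your $I$ and $I'$ are the paper's $P$ and $I$ under different names.

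There is, however, a genuine gap in your write-back gadget, and it is not routine bookkeeping. Your design rule that no atom shared with a cell agent may be indexed by the control state forces the newly written symbol to be communicated to $C_k$ by yet another guess, and you claim this guess is ``verified against $I$, which now knows the symbol to be written.'' For that verification to be a rendezvous, the meeting atom must be indexed by the written symbol $\beta$ on $I$'s side; since $C_k$ moves deterministically, $C_k$ can only be pointing at that atom if the outcome of its \emph{previous} atom already encoded $\beta$ --- that is, the guess of $\beta$ must be made before the transition is chosen. At a universal state the transition, and hence $\beta$, is chosen by Player~2 \emph{after} that guess, so Player~2 can force a deadlock even on accepting computations and the reduction fails in the direction ``$M$ accepts $\Rightarrow$ Player~1 wins.'' The alternative reading --- $C_k$ guesses $\beta$ at a one-party atom of its own after the transition --- destroys the ``at most one enabled atom'' property claimed in the statement, since such an atom is enabled concurrently with whatever $I$ is doing. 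The read and the write are not symmetric: the read value is already determined when Player~1 guesses it, the written value is not. The paper sidesteps all of this by dropping your design rule: $C_k$ \emph{is} a party of the $q$-indexed transition atom $n_{q,\alpha,k}$ (which it reaches via the state guess at $n_{\alpha,k}$ when the head arrives), so the outcome of that atom routes $C_k$ exactly to $n_{\beta,k}$ and the write involves no guessing at all. Your architecture is repairable --- for instance, let $I$ forget $\beta$ to meet $C_k$ at a $\beta$-free atom where Player~1 guesses $\beta$ after the transition, and verify that guess at a subsequent rendezvous with $I'$, who still remembers $\beta$ --- but as written the construction does not go through.
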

\begin{proof}
Sketch. (See the appendix for details.) We modify the 
construction of Theorem \ref{thm:hard} so that it yields 
a deterministic negotiation. The old construction has an atom
$n_{q, \alpha, k}$ for each state $q$, tape symbol $\alpha$, 
and cell index $k$, with $I$, $P$, and $C_k$ as parties. 

The new construction adds atoms $n_{q, k}$, with $I$ and $P$ as parties,
and $n_{\alpha, k}$, with $P$ and $C_k$ as parties. Atoms $n_{q, k}$ have an outcome 
for each tape symbol $\alpha$, and atoms $n_{\alpha, k}$ have an outcome for each state $q$. 
New atoms are controlled by Player 1. 

In the new construction, after the outcome of $n_{q, \alpha, k}$ for transition 
$(q',\beta,R) \in \delta(q,\alpha)$, agent $C_k$ moves to $n_{\beta, k}$, and agents 
$I$ and $P$ move to $n_{q', k+1}$. Intuitively, $C_k$ waits for the head to return to cell $k$, 
while agents $I$ and $P$ proceed. Atom $n_{q', k+1}$ has an outcome for every tape symbol $\gamma$. 
Intuitively, at this atom Player 1 guesses the current tape symbol in cell $k+1$;
the winning strategy corresponds to guessing right. After guessing, say, symbol $\gamma$,
agent $I$ moves directly to $n_{q',\gamma, k+1}$, while $P$ moves to $n_{\gamma, k+1}$.
Atom $n_{\gamma, k+1}$ has one outcome for every state of $M$. Intuitively, Player 1 now
guesses the current control state $q'$, after which both $P$ and  $C_{k+1}$ 
move to $n_{q',\gamma, k+1}$. Notice that all moves are now deterministic. 

If Player 1 follows the winning strategy, then the plays mimic those of the old construction:
a step like $(n_{q, \alpha, k}, (q',\beta,R))$ in the old construction, played when the current symbol
in cell $k+1$ is $\gamma$, is mimicked by a sequence 
$(n_{q, \alpha, k}, (q',\beta,R))$ $(n_{q',k+1}, \gamma)$ $(n_{\gamma, k+1}, q')$ of moves
in the new construction.

\end{proof}

\section{Termination in Sound Deterministic Negotiations}
\label{sec:soundDet}

In \cite{negI,negII} it was shown that the soundness problem 
(deciding whether a negotiation is sound) can be solved in polynomial time 
for deterministic negotiations and
acyclic, weakly deterministic negotiations (the case of cyclic weakly 
deterministic negotiations is open), while the problem is PSPACE-complete for
arbitrary negotiations. Apparently, Theorem \ref{thm:hard2} proves that 
the tractability of deter\-mi\-nis\-tic negotiations stops at game problems. 
We show that this is not the case. Well-designed negotiations are sound, 
since otherwise they contain atoms that can never occur (and can therefore be removed),
or a deadlock is reachable. Therefore, we are only interested
in the termination problem for sound negotiations. We prove that for sound deterministic negotiations
(in fact, even for the larger class of weakly deterministic type 2 negotiations) the 
termination and concluding-outcome problems are solvable in polynomial time. For this, we show that the 
well-known attractor construction for reachability games played on graphs as arenas can be ``lifted'' 
to sound and weakly deterministic type 2 negotiation arenas.

\begin{definition}
Let $\N$ be a negotiation arena with a set of atoms $N=N_1 \cup N_2$. Given $n \in N$, let $P_{n,det}$ 
be the set of deterministic agents participating in $n$.

The attractor of the final atom $n_f$ is $\mathcal{A} = \bigcup\limits_{k=0}^\infty \mathcal{A}_{k}$, where $\mathcal{A}_{0}  =  \{ n_f \}$ and 
\begin{equation*}
\begin{array}[t]{rl} 
\mathcal{A}_{k+1}  =  \mathcal{A}_{k} \;
 \cup & \{ n \in N_1 : \exists \r \in R_n \forall a \in P_{n,det} : \; \mathcal{X}(n,a,\r) \in \mathcal{A}_{k} \} \\
 \cup & \{ n \in N_2 : \forall \r \in R_n \forall a \in P_{n,det}: \; \mathcal{X}(n,a,\r) \in \mathcal{A}_{k}\}
\end{array}
\end{equation*}

Given a marking $\vx \neq \vx_f$ of $\N$ and a deterministic agent $a$, 
the attractor position of $a$ at $\vx$ is the smallest $k$ such that 
$\vx(a) \in \mathcal{A}_k$, or $\infty$ if $\vx(a) \notin \mathcal{A}$. 
Let $a_1, \ldots, a_k$ be the deterministic agents of $\N$.
The (attractor) position vector of $\vx$ is the tuple $(p_1, ..., p_k)$ where $p_i$ is the attractor position of $a_i$.
\end{definition}

\begin{theorem}
\label{thm:attract}
Let $\N$ be a sound, weakly deterministic type 2 negotiation arena. Player 1 has a winning strategy in the 
termination game iff $n_0 \in \mathcal{A}$.
\end{theorem}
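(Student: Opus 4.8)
The statement is an ``iff'', and since concurrent reachability games need not be determined for sure-winning, the plan is to prove each direction by exhibiting an explicit strategy: read the attractor $\mathcal{A}$ as prescribing a \emph{memoryless} strategy for Player~1 when $n_0 \in \mathcal{A}$, and its complement $\mathcal{B} = N \setminus \mathcal{A}$ as prescribing a memoryless joint strategy for Scheduler and Player~2 when $n_0 \notin \mathcal{A}$. In both directions the position vector of the current marking is the measure that drives the argument, and soundness is invoked only at the end, to turn ``the play halts'' into ``the play halts at $\vx_f$''.

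For $n_0 \in \mathcal{A} \Rightarrow$ Player~1 wins, Player~1's strategy is: whenever Scheduler schedules $S$ and $n \in S \cap N_1$ with $n \in \mathcal{A}_{k+1}\setminus\mathcal{A}_k$, pick an outcome $\r$ witnessing $n \in \mathcal{A}_{k+1}$, i.e.\ with $\mathcal{X}(n,a,\r) \in \mathcal{A}_k$ for every $a \in P_{n,det}$. First I would prove the invariant that, along any consistent play, every deterministic agent $a$ has $\vx(a) \subseteq \mathcal{A}$: it holds at $\vx_0$ since $n_0 \in \mathcal{A}$; it is preserved at $N_1$-atoms by Player~1's choice; and at $N_2$-atoms because the $N_2$-clause forces $\mathcal{X}(n,a,\r) \in \mathcal{A}_k$ for \emph{every} outcome. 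A consequence, using that by weak determinism every reachable atom has a deterministic party (which then sits exactly at that atom), is that every enabled atom lies in $\mathcal{A}$. Then I would show that as long as $n_f$ has not occurred, every step strictly decreases the sum of the components of the position vector: an occurring atom $n \neq n_f$ has all its deterministic parties at one position $k \geq 1$, and after the outcome they all drop to positions $\leq k-1$ (by Player~1's choice if $n \in N_1$, by the $N_2$-clause if $n \in N_2$), while deterministic agents not party to any scheduled atom do not move. Since this sum is a bounded natural, every consistent play is finite; it halts at a marking enabling no atom, which by soundness (no reachable deadlock) must be $\vx_f$, so the play ends with $n_f$ occurring and Player~1 wins.

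For $n_0 \notin \mathcal{A} \Rightarrow$ Player~1 does not win, complementing the attractor fixed point gives: for $n \in N_1 \cap \mathcal{B}$, \emph{every} outcome $\r$ has some $a \in P_{n,det}$ with $\mathcal{X}(n,a,\r) \in \mathcal{B}$; for $n \in N_2 \cap \mathcal{B}$, \emph{some} outcome $\r$ does. I would then describe a joint Scheduler/Player~2 strategy maintaining the invariant ``some deterministic agent $a$ has $\vx(a) = \{n\}$ with $n \in \mathcal{B}$'': if that $n$ is enabled, Scheduler schedules the singleton $\{n\}$ and, if $n \in N_2$, Player~2 plays the bad outcome above, so that afterwards some deterministic party of $n$ again sits in $\mathcal{B}$, regardless of Player~1's move at $n$ when $n \in N_1$; if $n$ is not enabled, then $a$ is a party of no enabled atom, so Scheduler schedules any enabled atom and $a$ stays put. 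The invariant holds at $\vx_0$ since $n_0 \in \mathcal{B}$; it forbids $\vx_f$, and soundness forbids reachable deadlocks, so some atom is always enabled, every consistent play is infinite, $n_f$ never occurs, and no Player~1 strategy wins every play.

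The hard part is the bookkeeping that makes these two ranking/invariant arguments valid for \emph{weakly} deterministic arenas and not merely fully deterministic ones: one has to show that the attractor, although defined purely through the deterministic parties $P_{n,det}$, genuinely governs the dynamics --- which atoms are enabled at a reachable marking and how they are left --- and this is exactly where the type~2 hypothesis is used. A related point that needs care is that nondeterministic agents, whose markings may be non-singleton and which are ignored by the position vector, cannot sabotage the measure: they never occur in any $P_{n,det}$, so they neither contribute to nor block the strict decrease, and they do not affect the bound on the position vector.
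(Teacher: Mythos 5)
Your proof is correct and follows essentially the same route as the paper's: the attractor strategy with a decreasing position-vector measure for the direction $n_0 \in \mathcal{A}$, and the complement invariant (some deterministic agent stuck in $\mathcal{B}$) plus soundness for the converse. The only cosmetic differences are that you rank by the sum of the position vector rather than the componentwise well-founded order, and in the converse direction you let Scheduler cooperate with Player~2, whereas the paper's Player~2 strategy maintains the invariant against every Scheduler; both suffice for the statement.
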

\begin{proof}
We start with an observation: If all deterministic agents are ready to take part in $n_f$, 
then $n_f$ and only $n_f$ is enabled. 
Indeed, since deterministic agents are only ready to engage in at most one atom, the only atom
with a deterministic party that can be enabled is $n_f$. Moreover, by weak determinism type 2 every atom has 
a deterministic party, and so no atom other than $n_f$ can be enabled. Finally, by soundness at least
one atom is enabled, and so $n_f$ is the only enabled atom. 

\noindent ($\Leftarrow)$: Assume that $n_0 \in \mathcal{A}$. We fix the \emph{attractor strategy} for Player 1.
We define the attractor index of an atom $n \in \mathcal{A}$ as the smallest $k$ such that 
$n \in \mathcal{A}_k$. The strategy for an atom $n \in N_1 \cap \mathcal{A}$ 
is to choose any outcome such that all deterministic parties of $n$ move to an atom 
of smaller attractor index; formally, we choose any outcome $\r$ such that 
for every deterministic party $a$ the singular atom in $\trans(n,a,\r)$ has smaller attractor index than $n$. 
Such an outcome exists by construction of $\mathcal{A}$. For atoms $n \in N_1 \setminus \mathcal{A}$ 
we choose an arbitrary atom. Notice that this strategy is not only memoryless,
but also independent of the current marking. 

We show that the attractor strategy is winning. By the definition of the game, 
we have to prove that every play following the strategy ends with $n_f$ occurring.
By the observation above, it suffices to prove that the play reaches a marking at which
every deterministic agent is ready to engage in $n_f$. 

Assume there is a play $\pi$ where Player 1 plays according to the attractor strategy, which 
never reaches such a marking. Then the play never reaches the final marking $\vx_f$ either.
We claim that at all markings reached along $\pi$, the deterministic agents are only
ready to engage in atoms of $\mathcal{A}$. We first observe that, initially, all deterministic agents 
are ready to engage in $n_0$, and $n_0 \in \mathcal{A}$. Now,
assume that in some marking reached along $\pi$ the deterministic agents are only
ready to engage in atoms of $\mathcal{A}$. Then, by weak determinism type 2, all enabled atoms
belong to $\mathcal{A}$, and therefore also the atoms chosen by Scheduler.
By the definition of $\mathcal{A}$, after an atom of $N_2 \cap \mathcal{A}$ occurs, the
deterministic agents are ready to engage in atoms of $\mathcal{A}$ only; by the definition
of the attractor strategy, the same holds for atoms of $N_1 \cap \mathcal{A}$. This concludes 
the proof of the claim.

Since all markings $\vx$ reached along $\pi$ satisfy $\vx \neq \vx_f$ and 
$\vx(a) \in \mathcal{A}$ for every deterministic agent $a$, they all
have an associated attractor position vector whose components are natural numbers.
Let $P_k$ denote the position vector of the marking reached after $k \geq 0$ steps
in $\pi$. Initially only the initial atom $n_0$ is enabled, and so 
$P_0 = (k_0, k_0, \ldots, k_0)$, where $k_0$ is the attractor position of $n_0$. 
We have $k_0 < |N|$, the number of atoms of the negotiation arena $\N$. 
Given two position vectors $P= (p_1, ..., p_k)$ and $P'=(p_1', ..., p_k')$, we say $P \prec P'$ if
$p_i \leq p_i'$ for every $1 \leq i \leq k$, and $p_i < p_i'$ for at least one $1 \leq i \leq k$.
By the definition of the attractor strategy, the sequence $P_0, P_1, ...$ of attractor positions 
satisfies $P_{i+1} \prec P_i$ for every $i$. Since $\prec$ is a well-founded order, 
the sequence is finite, i.e., the game terminates. By the definition of the game, it terminates
at a marking that does not enable any atom. Since, by assumption, 
the play never reaches the final marking $\vx_f$, this marking is a deadlock, which contradicts the 
soundness of $\N$.

\noindent ($\Rightarrow$): As this part is dual to part $\Leftarrow$, we only sketch the idea. The complete proof can be found in the appendix.

Let $\mathcal{B} = N \setminus  \mathcal{A}$, and 
assume $n_0 \in \mathcal{B}$. 
A winning strategy for Player 2 is to choose for an atom $n \in N_2 \cap \mathcal{B}$ any outcome $r$ such that at least one deterministic agent
moves to an atom not in $\mathcal{A}$. Such an outcome exists by construction of $\mathcal{A}$. For atoms in $N_2 \setminus \mathcal{B}$
we chose an arbitrary outcome.

This strategy achieves the following invariant: If at some marking $\vx$ reached along a play according to this strategy, there is a deterministic agent $a$ that satisfies $\vx(a)=n$ and $n \in \mathcal{B}$, then the same holds after one more step of the play. We then conclude that since $n_0 \in \mathcal{B}$ this invariant holds in every step of the play. Therefore $n_f$, in which all deterministic agents participate and which is not in $\mathcal{B}$, can never be enabled. Finally, because of soundness a play cannot end in a deadlock and thus every play will be of infinite length.
\end{proof}

\begin{corollary}
\label{cor:poly}
For the termination game over sound and weakly deterministic type 2 negotiations, the following holds:
\begin{enumerate}[(a)]
\item The game collapses to a two-player game.
\item Memoryless strategies suffice for both players.
\item The winner and the winning strategy can be computed in $O(|\outc| \ast |\agents|)$ time, where 
$A$ is the set of agents and $|\outc|$ the total number of outcomes of the negotiation.
\end{enumerate}
\end{corollary}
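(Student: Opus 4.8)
The plan is to derive all three items directly from Theorem \ref{thm:attract} and the explicit description of the attractor strategy constructed in its proof. For item (a), I would observe that the proof of Theorem \ref{thm:attract} shows that whichever of Player~1 or Player~2 owns $n_0$ (in the sense $n_0\in\mathcal A$ or $n_0\in\mathcal B$) has a winning strategy that never consults Scheduler's choices except to react to the current set $S$ of scheduled atoms, and that this strategy wins against \emph{every} Scheduler. Hence Scheduler is powerless to influence the outcome: the game's winner is already determined by $n_0\in\mathcal A$, so Scheduler can be dropped and we are left with a genuine two-player game (in fact, one whose winner does not even depend on the $N_1/N_2$ split for the non-owner's atoms beyond what $\mathcal A$ records).

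For item (b), I would point out that the attractor strategy described in the ($\Leftarrow$) direction assigns to each atom $n\in N_1\cap\mathcal A$ a \emph{fixed} outcome $\r_n$ (any outcome witnessing $n\in\mathcal A_{k}$ for $k$ the attractor index of $n$), and an arbitrary fixed outcome to atoms in $N_1\setminus\mathcal A$; this prescription depends only on $n$, not on the partial play or even the current marking, so it is memoryless (indeed positional and history-free). Symmetrically, the ($\Rightarrow$) direction gives Player~2 a fixed outcome $r_n$ for each $n\in N_2\cap\mathcal B$ witnessing that some deterministic agent leaves $\mathcal A$, and an arbitrary fixed outcome elsewhere. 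Since by item (a) the owner of $n_0$ wins, and the owner's strategy is memoryless, while the loser trivially ``needs no memory'' to lose, memoryless strategies suffice for both players.

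For item (c), the algorithm is just the fixed-point computation of $\mathcal A=\bigcup_k\mathcal A_k$ from the definition, together with recording, for each $n\in N_1$ added to $\mathcal A$, the witnessing outcome (this yields the winning strategy when $n_0\in\mathcal A$; when $n_0\notin\mathcal A$ one instead records, for each $n\in N_2\setminus\mathcal A$, an outcome sending some deterministic party outside $\mathcal A$, which exists precisely because $n\notin\mathcal A$). I would argue the running time as follows: each atom enters $\mathcal A$ at most once, so there are at most $|N|$ ``productive'' iterations; to decide whether to add $n$ we examine, for each outcome $\r\in R_n$ and each deterministic party $a\in P_{n,det}$, whether the singleton target $\mathcal X(n,a,\r)$ is already in $\mathcal A$, which is $O(|R_n|\cdot|P_{n,det}|)$ work. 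Summing $|R_n|\cdot|P_{n,det}| \le |R_n|\cdot|\agents|$ over all atoms gives $O(|\outc|\cdot|\agents|)$ per pass; with a standard worklist implementation (re-examine an atom only when one of the target atoms it points to is newly added), the total is $O(|\outc|\cdot|\agents|)$.

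The main obstacle is item (a): one must be careful that ``the game collapses to a two-player game'' really follows, i.e.\ that removing Scheduler does not change the winner. This is exactly guaranteed by the structure of Theorem \ref{thm:attract} — the winning strategy for the owner of $n_0$ is shown to win against all Schedulers, and the invariant arguments in both directions never use any property of Scheduler beyond ``it picks enabled independent atoms'' — so the step is sound, but it is the one that requires quoting the proof of Theorem \ref{thm:attract} rather than merely its statement. The remaining two items are essentially bookkeeping on top of that.
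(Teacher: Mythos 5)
Your proposal is correct and follows essentially the same route as the paper: (a) and (b) are read off from the Scheduler-independence and positionality of the attractor strategies in Theorem \ref{thm:attract}, and (c) is the same worklist/counter-based fixed-point computation of $\mathcal{A}$ (the paper's appendix algorithm maintains per-outcome counters of parties not yet reaching the attractor, which is exactly your ``re-examine only when a target atom is newly added'' amortization). No gaps to report.
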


\begin{proof}
(a) and (b): The attractor computation and the strategies used in the proof of Theorem \ref{thm:attract} are independent of the choices of Scheduler; the strategies are memoryless. \\
(c) An algorithm achieving this complexity can be found in the appendix.
\end{proof}

We still have to consider the possibility that requiring soundness alone, without the addition of
weak determinism type 2, already reduces the complexity of the termination problem. The following theorem shows that this is not the case, and concludes our study. The proof is again an adaption of the reduction from turing machines and can be found in the appendix.

\begin{theorem}
\label{thm:hard3}
The termination problem is EXPTIME-hard for sound negotiation arenas.
\end{theorem}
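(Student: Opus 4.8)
The plan is to adapt the reduction from Theorem \ref{thm:hard} (the EXPTIME-hardness proof via alternating linearly-bounded Turing machines), but this time we must ensure that the resulting negotiation arena is \emph{sound}. Recall that in the original construction an atom $n_{q,\alpha,k}$ models the configuration ``control state $q$, head at cell $k$, symbol $\alpha$ in cell $k$'', with parties $I$, $P$ and $C_k$, and that every reachable marking enables exactly one atom. The obstacle to soundness is twofold: first, a \emph{rejecting halting configuration} of the TM (where $\delta(q,\alpha)=\emptyset$) gives an atom with no outgoing outcomes but which is not $n_f$ --- hence a deadlock; second, even if we route halting configurations somewhere, we must guarantee that \emph{every} reachable marking can still be extended to $\vx_f$, and that every atom is reachable.

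The key steps, in order: First, I would add to the TM (without loss of generality) the property that it has a unique accepting state $q_{acc}$ and a unique rejecting state $q_{rej}$, each with no outgoing transitions, and that every computation halts (guaranteed for linearly-bounded alternating TMs since configurations are finite --- one can always add a clock, or simply observe that an alternating LBA can be assumed halting). Second, I would make the atom $n_{q_{acc},\alpha,k}$ have a single outcome whose transition function sends every party ($I$, $P$, and $C_k$) to $\{n_f\}$, so an accepting leaf leads straight to termination. Third --- and this is the crux --- I would make rejecting leaves \emph{also} lead to $n_f$, but via a ``detour'' that Player 2 is forced to traverse and that never deadlocks: concretely, from $n_{q_{rej},\alpha,k}$ all parties move to $n_f$ as well. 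This makes \emph{every} play terminate at $n_f$, which destroys the termination game! So instead the right move is to keep rejecting leaves leading to $n_f$ but route them through an auxiliary gadget that distinguishes, via the \emph{concluding outcome} or via which agents reach $n_f$, accepting from rejecting runs --- however the statement is about the \emph{termination} game, so this does not work directly either.

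The correct fix, which I would adopt, is to decouple ``reaching $n_f$'' from ``termination being forced'': make rejecting-leaf atoms lead into a small \emph{cyclic} gadget controlled by Player 2 that Player 2 can loop forever in, but which also has an escape outcome to $n_f$. Then every reachable marking can still be extended to $\vx_f$ (soundness condition (b) holds: even from inside the loop, Player 2's escape outcome exists as an occurrence sequence), every atom is reachable (condition (a): we only add reachable gadgetry, and can prune unreachable configuration atoms, or argue that from $n_0$ all configuration atoms of the (finite) computation tree are reachable under suitable Scheduler/Player choices), yet Player 1 cannot \emph{force} termination from a rejecting branch because Player 2 will choose to loop. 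Thus Player 1 wins the termination game iff she can steer every branch of the alternating computation to an accepting leaf, i.e.\ iff $M$ accepts $x$. I would then verify soundness explicitly: (a) reachability of every atom follows because the gadget atoms sit on paths from $n_0$, and (b) extendability to $\vx_f$ holds because from the configuration-simulation part one can always drive the TM (ignoring the game) to a halting leaf and thence to $n_f$ (accepting leaf) or into the loop-gadget and out via its escape outcome (rejecting leaf).

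The main obstacle I anticipate is reconciling soundness condition (b) --- every occurrence sequence extends to one reaching $\vx_f$ --- with the need that Player 2 can \emph{strategically} avoid $\vx_f$: soundness is a property of the underlying negotiation (quantifying over \emph{some} extension), while the game asks whether Player 1 can \emph{force} $\vx_f$. The loop gadget is exactly what separates these: it provides the escape outcome needed for soundness while giving Player 2 a non-terminating alternative for the game. Getting the gadget small enough to keep the reduction polynomial, and checking that adding it does not accidentally create a deadlock or an unreachable atom (e.g.\ making sure every agent $C_i$ eventually gets routed to $n_f$ along every maximal sequence), is the fiddly part; the rest is a routine re-run of the Theorem \ref{thm:hard} analysis, with the partition of atoms into $N_1$ (existential states) and $N_2$ (universal states, plus the loop gadget) unchanged.
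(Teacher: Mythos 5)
Your central idea --- route rejecting leaves into a cyclic gadget that Player~2 controls, with an escape outcome to $n_f$ so that soundness condition (b) (every occurrence sequence extends to one reaching $\vx_f$) holds while Player~2 can still strategically refuse to terminate --- is exactly the mechanism the paper uses: its ``anything-mode'', entered via two new atoms $n_1,n_2$ and a fresh agent $S$, with $n_2\in N_2$ offering a looping outcome \texttt{s} and an escape outcome \texttt{f} to $n_f$. You also correctly isolate the key tension between soundness (existential over extensions) and the game (what Player~1 can force), and correctly give Player~1 the choice at accepting leaves.

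There is, however, a genuine gap in your treatment of soundness condition (a): \emph{every atom must be enabled at some reachable marking}. The construction of Theorem~\ref{thm:hard} creates an atom $n_{q,\alpha,k}$ for \emph{every} triple $(q,\alpha,k)$, and many of these are unreachable (the head may never visit cell $k$ in state $q$ while $\alpha$ is written there); the paper explicitly names this as the second reason the original negotiation is unsound. Your two proposed remedies both fail: pruning the unreachable configuration atoms requires deciding configuration reachability of the linearly bounded machine, which is itself PSPACE-hard and hence cannot be done inside a polynomial-time reduction; and the observation that all atoms ``of the computation tree'' are reachable does not help, because the negotiation contains atoms for \emph{all} triples, not only those occurring in the computation tree. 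The paper closes this hole with the anything-mode itself: after the looping outcome \texttt{s} of $n_2$, every agent except $S$ becomes ready to engage in \emph{every} non-final atom, and $n_2$ is re-enabled after any such atom fires, so every atom occurs in some occurrence sequence without the reduction ever having to know which configurations are genuinely reachable. Your gadget only provides an escape to $n_f$; you would need to enlarge it so that it can also drive the negotiation through an arbitrary atom (keeping this behaviour in $N_2$ so Player~1 gains no new power), at which point your construction essentially becomes the paper's.
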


\section{The Concluding-Outcome Problem}

We demonstrate that the algorithm can also be used to solve the concluding-outcome problem. Remember that for each agent $a$ we have selected a set $G_a$ of outcomes that lead to the final atom and that we want occur. The key idea is to modify the negotiation in the following way: We add two atoms per agent, say \emph{good$_a$} and \emph{bad$_a$}, and redirect the outcomes of that agent that lead to the final atom to \emph{good$_a$} if they are in $G_a$ and to \emph{bad$_a$} otherwise. We illustrate this approach by example.

We slightly change the setting of Father, Mother and two Daughters such that both daughters want to go to the party. In the negotiation, that results in two additional atoms where D$_2$ talks alone with each parent. Figure \ref{extFDDM} shows the new negotiation omitting all edges of father and mother to simplify the representation. They both have a nondeterministic edge from each of their ports leading to all of their respective ports.

Imagine the goal 
of Daughter D1 is to get a ``yes'' answer for herself, but a `no'' answer for D2,
who always spoils the fun. Will a coalition with one parent suffice to achieve the goal?
To answer this question, we modify the negotiation before applying the construction as shown in 
Figure \ref{extFDDM}. We introduce dummy atoms \emph{good$_i$} and \emph{bad$_i$} for each Daughter $i$, and
we redirect ``yes'' transitions leading to $n_f$ to \emph{good$_1$} and \emph{bad$_2$} (these transitions are represented as dashed lines), analogously we redirect ``no'' transitions to \emph{bad$_1$} and \emph{good$_2$} (represented as dotted lines).

We apply the algorithm with a slight alteration: Instead of starting from $n_f$, we initially add all newly introduced good atoms to the attractor.
Applying the algorithm for the coalition Father-D1 
yields $\{$\emph{good$_1$},\emph{bad$_2$},$n_4\}$ as attractor; for Mother-D1
we get $\{$\emph{good$_1$},\emph{bad$_2$},$n_5\}$. So neither parent has enough influence to 
achieve the desired outcome.

\begin{figure}
\centering
\scalebox{0.8}{
\input{tikz/FDDM_extended.tex}
}
\caption{Applying the algorithm to the final-outcome problem. Father and mother edges omitted.}
\label{extFDDM}
\end{figure}

\section{Conclusions and Related Work}

We have started the study of games in the negotiation model 
introduced in \cite{negI,negII}. Our results confirm the low computational complexity 
of deterministic negotiations, however with an important twist: while even 
the simplest games are EXPTIME-hard for arbitrary deterministic negotiations, they become polynomial in the  \emph{sound} case. So soundness, a necessary feature of a well designed negotiation, also turns out to have a drastic beneficial effect on the complexity of the games.

We have shown that our games are also polynomial for sound and weakly deterministic negotiations. However, the complexity of deciding soundness for this case is unknown. We conjecture that it is also polynomial, as for the deterministic case. 

The objectives we have considered so far are qualitative: Either a coalition can reach termination or not, either it can enforce a concluding outcome or not. A possibility for future studies would be to look at quantitative objectives.

We have only considered 2-player games, since in our settings the behavior of the third player (the Scheduler) is either irrelevant, or is controlled by one of the other two players. We intend to study the extension to a proper 3-player game, or to a multi player game. Combined with qualitative objectives, this allows for multiple interesting questions: Which coalition of three agents can reach the best payoff? Which two agents should a given agent side with to maximize his payoff? Are there ``stable'' coalitions, that means, no agents can change to the opposing coalition and improve his payoff?

\bibliographystyle{eptcs}
\bibliography{references}

\appendix

\section{Appendix}

\subsection{Proofs of section \ref{sec:complexity}}

\begin{reftheorem}{thm:membership}
The termination problem is in EXPTIME.
\end{reftheorem}

For the proof we need some definitions and results from \cite{DBLP:journals/jacm/AlurHK02}.

A concurrent game structure is a tuple $S=(k,Q,\Pi, \pi, d, \delta)$ where
\begin{itemize}
\item $k \geq 1$ is a natural number, the number of players.
\item $Q$ is a finite set of states.
\item $\Pi$ is a finite set of propositions.
\item $\pi$ assigns every state $q\in Q$ a set of propositions that are true in $q$.
\item $d$ assigns every player $a \in \{1, ..., k\}$ and every state $q \in Q$ a natural number $d_a(q) \geq 1$ of possible moves. We identify these moves with natural numbers $1, ..., d_a(q)$. For each state $q$, we write $D(q)$ for the set $\{1, ..., d_1(q)\} \times ... \times \{1, ..., d_k(q)\}$ of move vectors.
\item $\delta$ is the transition function that assigns a state $q \in Q$ and a move vector $(j_1, ..., j_k) \in D(q)$ a state $\delta(q, j_1, ..., j_k)$ that results from state $q$ if every player $a \in \{1,...,k\}$ chooses move $j_a$.
\end{itemize}

The following is one of the results from \cite{DBLP:journals/jacm/AlurHK02}:

\begin{theorem}
Model checking for reachability objectives on concurrent game structures is possible in time linear in the number of transitions of the game and the length of the reachability formula.
\label{thm:modelpoly}
\end{theorem}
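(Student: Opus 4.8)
The plan is to characterize the reachability winning set as a least fixpoint of a one-step controllable-predecessor operator, and then implement the fixpoint iteration with a counting scheme that touches each transition of the game structure only a constant number of times. Fix a coalition $A \subseteq \{1,\dots,k\}$ and let $T \subseteq Q$ be the target set extracted from the reachability formula (the states whose label under $\pi$ satisfies the atomic part of the objective). I define the controllable predecessor operator on $X \subseteq Q$ by
\[
\mathrm{CPre}(X) = \{ q \in Q : \exists\, m \text{ an } A\text{-move at } q \text{ s.t. } \forall \text{ completions } v \text{ of } m,\ \delta(q,v) \in X \},
\]
where an $A$-move $m$ fixes $j_a$ for every $a \in A$ and a completion $v \in D(q)$ fixes the remaining coordinates. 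The states from which $A$ can force reaching $T$ are exactly $W = \mu X.(T \cup \mathrm{CPre}(X))$. I would justify this by the standard unrolling argument: setting $W_0 = T$ and $W_{i+1} = W_i \cup \mathrm{CPre}(W_i)$, one shows by induction that $W_i$ is precisely the set of states from which $A$ can guarantee reaching $T$ within $i$ steps.

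Second, I would compute this fixpoint in linear time by a worklist algorithm with counters. For each state $q \notin T$ and each $A$-move $m$ available at $q$, initialize a counter $c(q,m)$ to the number of opponent completions $v$ of $m$ with $\delta(q,v) \notin T$. The key observation is that each full move vector $v \in D(q)$ --- that is, each transition of the game --- contributes to exactly one counter, namely $c(q, v|_A)$, so the total initialization cost is $\sum_q |D(q)|$, the number of transitions. I maintain $W$, initialized to $T$, together with a worklist of freshly added states. When a state $q'$ is popped, I process every transition $(q,v)$ with $\delta(q,v) = q'$ by decrementing $c(q, v|_A)$; whenever some $c(q,m)$ reaches $0$ and $q \notin W$, the partial move $m$ now guarantees $X$ against every opponent completion, so $q \in \mathrm{CPre}(W)$, and I add $q$ to $W$ and to the worklist.

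Correctness of the implementation rests on the invariant that $c(q,m)$ always equals the number of completions of $m$ whose successor is not yet in $W$; hence $c(q,m) = 0$ exactly captures membership of $q$ in $\mathrm{CPre}(W)$, and the set $W$ produced coincides with the least fixpoint above. For the running time, each state enters $W$ at most once, and when it does I traverse its incoming transitions once, performing a single counter decrement per transition. Thus every transition triggers at most one decrement over the entire run, and the fixpoint phase runs in time linear in the number of transitions. The dependence on the length of the reachability formula enters only through evaluating the atomic target predicate to build $T$ and, in the recursive model-checking setting, through the nesting of subformulas, yielding the stated linear factor.

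The main obstacle --- and the only place where the concurrent, joint-move structure genuinely bites --- is the counting bookkeeping. I must argue carefully that the per-partial-move counters neither over- nor under-count opponent completions, and that the number of counters together with all their decrements stays bounded by the number of transitions. This is what lets the $\exists$-my-move/$\forall$-opponent-move quantifier alternation in $\mathrm{CPre}$ be resolved without ever enumerating move vectors more than once, which is precisely what distinguishes the concurrent case from the straightforward turn-based attractor computation.
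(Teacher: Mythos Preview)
The paper does not prove this theorem at all: it is quoted verbatim as ``one of the results from \cite{DBLP:journals/jacm/AlurHK02}'' and used as a black box in the proof of Theorem~\ref{thm:membership}. There is therefore no paper proof to compare against.

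Your proposal is a faithful reconstruction of the standard argument from that reference: the controllable-predecessor least fixpoint together with per-$(q,m)$ counters so that each transition is charged once at initialization and at most once when its target enters $W$. The bookkeeping is sound; in particular your observation that every full move vector $v\in D(q)$ contributes to exactly one counter $c(q,v|_A)$ is what keeps both the number of counters and the total number of decrements bounded by $\sum_q |D(q)|$. One small point worth tightening: you silently assume access to the inverse relation $\delta^{-1}$ so that, when $q'$ is popped, you can enumerate the incoming transitions; this index must be built in a linear-time preprocessing pass, which you should state explicitly. With that, the argument is correct and simply supplies what the paper chose to cite rather than prove.
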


Using this result, it is easy to see that the termination problem can be solved in exponential time.

\begin{proof}
We construct a 3-player concurrent game structure and a reachability objective such that Player 1 wins the negotiation game if{}f he wins the newly constructed game
against the coalition of the other two players. 
The game has single exponential size in the size of the negotiation arena.

The states of the game are either
markings $\vx$ of the negotiation, or pairs $(\vx, N_{\vx})$, where $\vx$ is a marking
and $N_{\vx}$ is an independent set of atoms enabled at $\vx$. Nodes $\vx$ belong to Scheduler,
who chooses a set $N_{\vx}$, after which the play moves to $(\vx, N_{\vx})$. At nodes 
$(\vx, N_{\vx})$ Players 1 and 2 concurrently select subsets of $N_{\vx}$, and depending
on their choice the play moves to a new marking.

The states of the concurrent game structure are the following:
\[\begin{array}{rl}
Q = & \{\vx : \vx \text{ a reachable marking}\} \cup \{(\vx, N_{\vx}) : \vx \text{ a reachable marking}, \\
 & N_{\vx} \text{ a set of independent atoms enabled in } \vx\} \\
\end{array}\]
Only one proposition is needed to mark the final state:
\[ \Pi = \{\text{final}\}\]
\[ \pi(\vx) = \begin{cases}
\text{final} & \text{if }\vx = \vx_f \\
\emptyset & \text{otherwise}
\end{cases} \]
\[ \pi((\vx, N_{\vx})) = \emptyset \]
The move vectors, given by the cartesian product of the possible moves of each player,
are defined as follows: 
\begin{itemize}
\item $D(\vx) = \{1\}\times\{1\}\times S$ is the set of move vectors from state $\vx$, where 
$S$ is the set of all sets of independent atoms enabled in $\vx$.
\item $D((\vx, N_{\vx})) = \{F_1\}\times\{F_2\}\times\{1\}$ is the set of  move vectors from state 
$(\vx, N_{\vx})$, where $F_i$ is the set of all functions assigning each $n\in N_{\vx} \cap N_i$ 
an outcome $\r \in R_n$. 
\end{itemize}
The transition function $\delta$ is defined by $\delta(\vx,(1,1,s)) = (\vx,s)$, and \\
$\delta((\vx, N_{\vx}),(f_1,f_2,1)) = \vx'$,  where $\vx'$ is the result starting from $\vx$ after all the atoms of $N_{\vx}$ occur with the outcomes specified by $f_1,f_2$. Notice that,
since the atoms are independent, $\vx'$ does not depend on the order in which they occur.

Player 1 wins if the play reaches the 
final marking $\vx_f$, or equivalently, the reachability objective is $\Diamond\text{final}$. Using theorem \ref{thm:modelpoly}, the result follows.
\end{proof}

\begin{reftheorem}{thm:hard}
The termination problem is EXPTIME-hard even for negotiations 
in which every reachable marking enables at most one atom.
\end{reftheorem}

\begin{proof}
The proof is by reduction from the acceptance problem 
for alternating, linearly-bounded Turing machines (TM) \cite{DBLP:journals/jacm/ChandraKS81}. We are given an alternating TM 
$M$ with transition relation $\delta$, and an input $x$ of length $n$. We assume that 
$M$ always halts in one of two designated states $q_{accept}$ or $q_{reject}$, 
and does so immediately after reaching one of those states.

We define a negotiation with two agents $P$ and $I$ modeling the head position 
and internal state of $M$, and one agent 
$C_k$ for each cell. The set of atoms contains one atom $n_{q,\alpha,k}$ for each triple 
$(q, \alpha, k)$, where $q$ is a control state of $M$, $k$ is the current position of the 
head, and $\alpha$ is the current letter in the $k$-th tape cell. 

\begin{figure}[ht]
%\centering
\centerline{
\tikz{
%atoms
\nego[id=n_0ak,text={$C_k$}{$I$}{$P$},spacing=1]{0,0}
\nego[id=n_1ak,text={$C_k$}{$I$}{$P$},spacing=1]{3,0}
\nego[id=n_0bk,text={$C_k$}{$I$}{$P$},spacing=1]{0,-1.5}
\nego[id=n_1bk,text={$C_k$}{$I$}{$P$},spacing=1]{3,-1.5}
\nego[id=n_0ck,text={$C_k$}{$I$}{$P$},spacing=1]{0,-3}
\nego[id=n_1ck,text={$C_k$}{$I$}{$P$},spacing=1]{3,-3}
\nego[id=n_0ak1,text={$C_{k+1}$}{$I$}{$P$},spacing=1]{7,0}
\nego[id=n_1ak1,text={$C_{k+1}$}{$I$}{$P$},spacing=1]{10.1,0}
\nego[id=n_0bk1,text={$C_{k+1}$}{$I$}{$P$},spacing=1]{7,-1.5}
\nego[id=n_1bk1,text={$C_{k+1}$}{$I$}{$P$},spacing=1]{10.1,-1.5}
\nego[id=n_0ck1,text={$C_{k+1}$}{$I$}{$P$},spacing=1]{7,-3}
\nego[id=n_1ck1,text={$C_{k+1}$}{$I$}{$P$},spacing=1]{10.1,-3}
%boxes
\node[draw,dashed,fit={($(n_0ak_P0.north west)+(-0.4,0.4)$) ($(n_1ck_P2.south east)+(0.4,-0.4)$)}] (box1) {};
\node[anchor=north] at (box1.south) {tape cell k (current cell)};
\node[draw,dashed,fit={($(n_0ak1_P1.north west)+(-1.5,0.45)$) ($(n_1ck1_P2.south east)+(0.4,-0.4)$)}] (box2) {};
\node[anchor=north] at (box2.south) {tape cell k+1};
%labels
\node[anchor=east] at ($(n_0ak_P0.west) + (-1,0)$) {a};
\node[anchor=east] at ($(n_0bk_P0.west) + (-1,0)$) {b};
\node[anchor=east] at ($(n_0ck_P0.west) + (-1,0)$) {c};
\node[anchor=south] at ($(n_0ak_P1.north) + (0,0.5)$) {$q_1$};
\node[anchor=south] at ($(n_1ak_P1.north) + (0,0.5)$) {$q_2$};
\node[anchor=south] at ($(n_0ak1_P1.north) + (0,0.5)$) {$q_1$};
\node[anchor=south] at ($(n_1ak1_P1.north) + (0,0.5)$) {$q_2$};
\draw[dashed] (box1.north west) -- +(-1,1);
\node[anchor=south] at ($(box1.north west) + (0.2,0)$) {\begin{tabular}{c} TM \\ state \end{tabular}};
\node[anchor=east] at (box1.north west) {\begin{tabular}{c} cell \\ content \end{tabular}};
%edges
\pgfsetarrowsend{latex}
%agent C_k
\draw (n_1ak_P0) -- (n_1bk_P0);
\draw (n_1ak_P0) +(0,-0.75) -- +(-3,-0.75) -- (n_0bk_P0);
%agent I
\draw (n_1ak_P1) -- +(0,-0.75) -- +(4,-0.75) -- (n_0ak1_P1);
\draw (n_1ak_P1) -- +(0,-0.75) -- +(4,-0.75) -- (n_0bk1_P1);
\draw (n_1ak_P1) -- +(0,-0.75) -- +(2,-0.75) -- +(2,-2.25) -- +(4,-2.25) -- (n_0ck1_P1);
%agent P
\draw (n_1ak_P2) -- +(0,+0.5) -- +(4.5,+0.5) -- +(4.5,0) -- (n_0ak1_P2);
\draw (n_1ak_P2) -- +(0,+0.5) -- +(4.5,+0.5) -- +(4.5,-1.5) -- (n_0bk1_P2);
\draw (n_1ak_P2) -- +(0,+0.5) -- +(4.5,+0.5) -- +(4.5,-3) -- (n_0ck1_P2);
}
}
\caption{Part of the negotiation representing a TM, the edges drawn represent the transition $(q_2,a) \ra (q_1,b,R)$}
\label{fig:TM}
\end{figure}

Figure \ref{fig:TM} shows a part of the negotiation corresponding to a TM with two internal states, $q_1,q_2$ and three tape symbols $a,b,c$. The part shown represents two cells of the tape as well as the movement of the agents corresponding to the TM taking the transition $(q_2,a) \ra (q_1,b,R)$.

The parties of the atom $n_{q, \alpha, k}$ are $I$, $P$ and $C_k$, in particular $I$ and $P$ are agents of all atoms.
The atom $n_{q, \alpha, k}$ has one outcome 
$\r_\tau$ for each element $\tau \in \delta(q,\alpha)$, where $\tau$ 
is a triple consisting of a new state, a new tape symbol, and a direction 
for the head. We informally define the function $\trans(n_{q, \alpha, k},C_k, \r_\tau)$ 
by means of an example. Assume that, for 
instance, $\tau = (q', \beta, R)$, i.e., at control state $q$ and with the head reading 
$\alpha$, the machine can go to control state $q'$, write $\beta$, and move the head to 
the right. Then we have:  (i) $\trans(n_{q, \alpha, k},I,\r_\tau)$ contains all atoms of the form
$n_{q', \_, \_}$ (where $\_$ stands for a wildcard); (ii) $\trans(n_{q, \alpha, k},P,\r)$ 
contains the atoms $n_{\_, \_, k+1}$; and (iii) $\trans(n_{q, \alpha, k},C_k,\r_\tau)$ 
contains all atoms $n_{\_, \beta, \_}$. If atom $n_{q, \alpha, k}$ is 
the only one enabled and the outcome $\r_\tau$ occurs, then clearly in the new marking
the only atom enabled is $n_{q', \beta, k+1}$. So every reachable marking enables at most one atom.

Finally, the negotiation also has an initial atom that, loosely speaking, takes care 
of modeling the initial configuration.

The partition of the atoms is: an atom $n_{q, \alpha, k}$ belongs to $N_1$ if
$q$ is an existential state of $M$, and to $N_2$ if it is universal. It is easy to see that $M$ accepts $x$
if{}f Player 1 has a winning strategy.

Notice that if no reachable marking enables two or more atoms, 
Scheduler never has any choice. Therefore, the termination problem is EXPTIME-hard
even if the strategy for Scheduler is fixed. 

Formally, we are given an alternating TM $M=(Q, \Gamma, \delta, q_0, g)$ with tape alphabet $\Gamma$ and a function $g \colon Q \ra \{\wedge, \vee, \text{accept}, \text{reject}\}$ that 
determines the state type, and an input $x$ of length $n$. We assume that $M$ always halts, 
does so in designated states $q_{accept}$ or $q_{reject}$, and does so immediately when reaching 
one of those. We define a negotiation game $\N_{M,x}$ as follows:

\begin{itemize}
\item The set of agents is
\[A = \{I,P\} \cup \{C_k : k \in \{0,...,n-1\}\}\]
\item The set of atoms is
\[N = \{n_0, n_f\} \cup \{n_{q,\alpha, k} : (q,\alpha, k) \in Q\times\Gamma\times\{0,...,n-1\}\}\]
\item The set $N_1$ of atoms owned by Player 1 contains $n_0$, $n_f$, and all atoms $n_{q,\alpha, k}$ such that $g(q) \not= \wedge$.
The set $N_2$ contains all other atoms.
\item The set of parties of each atom is given by $P_{n_0} = P_{n_f} = A$, and \\ 
$P_{n_{q,\alpha,k}} = \{I,P, C_k\}$ for every $n_{q,\alpha, k} \in N$.
\item The set of outcomes of each atom is defined as follows: $R_{n_0} = \{\text{st}\}$,\\
$R_{n_f} = \{\text{end}\}$, and for every $n_{q,\alpha, k} \in N$
\[R_{n_{q,\alpha,k}} = 
\begin{cases}
 \{\text{f}\} & \mbox{ if } q = q_{accept} \\
 \{\text{s}\} &  \mbox{ if } q = q_{reject}  \\
 \{\r_\tau : \tau \in \delta(q, \alpha)\} & \mbox{otherwise}
\end{cases}\]
\item The $\trans(\_, I, \_)$ function for agent $I$ is given by:
\[\begin{array}{rcl}
 \trans(n_0,I,\text{st}) & = & \{n_{q_0,\alpha, 0} : \alpha \in \Gamma\} \\
 \trans(n_{q,\alpha,k},I,r_\tau) & = & 
 \begin{cases}
  \{n_{q',\alpha, k-1} : \alpha \in \Gamma\} & \text{if }\tau = (q',\_, L)\\
  \{n_{q',\alpha, k+1} : \alpha \in \Gamma\} & \text{if }\tau = (q',\_, R)
 \end{cases} \\
 \trans(n_{q_{accept},\alpha,k},I,\text{f}) & = & \{n_f\}\\
 \trans(n_{q_{reject},\alpha,k},I,\text{s}) & = & \{n_{q_{reject},\alpha,k}\}
\end{array}\]
\item The $\trans(\_, P, \_)$ function for agent $P$ is given by
$\trans(n,P,r) = \trans(n,I,r)$ for all $n \in N, r \in P_n$.
\item Finally, the $\trans(\_, C_k, \_)$ function for agent $C_k$ (the only part of the definition that depends on the input $x$ to $M$) is given by
\[\begin{array}{rcl}
 \trans(n_0,C_k,\text{st}) & = & \{n_{q,\alpha, k} : q \in Q, \alpha\text{ the initial value of cell $k$}\} \cup \{n_f\} \\
 \trans(n_{q,\alpha,k},C_k, r_\tau) & = & \{n_f\} \cup \{n_{q',\beta, k} : q' \in Q\}\text{ if }\tau = (\_,\beta,\_) \\
 \trans(n_{q_{accept},\alpha,k},C_k,\text{f}) & = & \{n_f\}\\
 \trans(n_{q_{reject},\alpha,k},C_k,\text{s}) & = & \{n_{q_{reject},\alpha,k}\}
\end{array}\]
\end{itemize}

Observe that, in general, the negotiation $\N_{M,x}$ is nondeterministic and unsound. 
Initially, we have $\vx_0(a) = \{n_0\}$ for every agent $a$, and so only the atom $n_0$
is enabled. Thereafter, at any point in time, exactly one atom $n_{q, \alpha, k}$ is 
enabled until $n_{q_{accept}, \alpha, k}$ or $n_{q_{reject}, \alpha, k}$ is reached. 
Finally, $n_f$ is enabled if{}f $n_{q_{accept}, \alpha, k}$ was reached. Since 
the outcome of $n_{q, \alpha, k}$ is chosen by Player 1 when $q$ is an existential state,
and by Player 2 when it is a universal state, Player 1 has a winning strategy if{}f 
$M$ accepts $x$.
\end{proof}

\begin{reftheorem}{thm:hard2}
The termination problem is EXPTIME-hard even for deterministic negotiations 
in which every reachable marking enables at most one atom.
\end{reftheorem}

\begin{proof}
We now modify the initial construction to yield a deterministic (but still unsound) negotiation. First we present a general construction to remove nondeterminism from arbitrary negotiations, we then apply this construction with some alterations to fit the needs of negotiation games.
The transformation is as follows: for any triple $(n,a,\r)$ such that
 $\trans(n,a,\r) = \{n_1, \ldots, n_k\}$ for some
$k > 1$, we introduce a new atom $n_{a,\r}$, assigned to Player 1, with $a$ being the only agent that participates in $n_{a,\r}$, and with $k$ different outcomes $\r_1, \ldots, \r_k$, and redefine 
$\trans$ as follows: $\trans(n,a,\r) = \{n_{a,\r} \}$, and $\trans(n_{a,\r}, \r_i\} = \{n_i\}$ 
for every $1 \leq i \leq k$.
Intuitively, in the original negotiation after the outcome $(n_0, \texttt{st})$ agent 
$a$ is ready to engage in $n_1, \ldots, n_k$; in the new negotiation, after $(n_0, \texttt{st})$ $a$ commits to \emph{exactly one} of $n_1, \ldots, n_k$, and the choice is controlled by Player 1.
Unfortunately, we cannot apply this construction directly because of the following problem: An agent $C_k$ has to decide in which of the $n_{q, \alpha, k}$ he should to participate next, therefore he needs to know the state $q$ in which the head will arrive next in cell $k$. But this may not be decided until the head arrives in cell $k$. Therefore we need to postpone this decision until the head actually arrives in cell $k$.

This is achieved by the following alterations:
In the new construction, after firing the outcome of $n_{q, \alpha, k}$ for transition 
$(q',\beta,R) \in \delta(q,\alpha)$, agent $C_k$ moves to $n_{\beta, k}$, and agents 
$I$ and $P$ move to $n_{q', k+1}$. Intuitively, $C_k$ waits for the head to return to cell $k$, 
while agents $I$ and $P$ proceed. Atom $n_{q', k+1}$ has an outcome for every tape symbol $\gamma$. 
Intuitively, at this atom Player 1 guesses the current tape symbol in cell $k+1$;
the winning strategy corresponds to guessing right. After guessing, say, symbol $\gamma$,
agent $I$ moves directly to $n_{q',\gamma, k+1}$, while $P$ moves to $n_{\gamma, k+1}$.
Atom $n_{\gamma, k+1}$ has one outcome for every state of $M$. Intuitively, Player 1 now
guesses the current control state $q'$, after which both $P$ and  $C_{k+1}$ 
move to $n_{q',\gamma, k+1}$. Notice that all moves are now deterministic. 

If Player 1 follows the winning strategy, then the plays mimic those of the old construction:
a step like $(n_{q, \alpha, k}, (q',\beta,R))$ in the old construction, played when the current symbol
in cell $k+1$ is $\gamma$, is mimicked by a sequence 
$(n_{q, \alpha, k}, (q',\beta,R))$ $(n_{q',k+1}, \gamma)$ $(n_{\gamma, k+1}, q')$ of moves
in the new construction.

Formally, the construction is the following:

\begin{itemize}
%agents
\item The set of agents is
\[A = \{I,P\} \cup \{C_k : k \in \{0,...,n-1\}\}\]
%atoms
\item The set of atoms is
\[\begin{array}{rl}
N = \{n_0, n_f\} & \cup \{n_{q,\alpha, k} : (q,\alpha, k) \in Q\times\Gamma\times\{0,...,n-1\}\} \\
 & \cup \{n_{\alpha, k} : (\alpha, k) \in \Gamma\times\{0,...,n-1\}\} \\
 & \cup \{n_{q,k} : (q,k) \in Q \times\{0,...,n-1\}\}
\end{array}\] 
\item The set $N_1$ of atoms owned by Player 1 contains $n_0$, $n_f$, all atoms $n_{\alpha, k}$, all atoms $n_{q,k}$, and all atoms $n_{q,\alpha, k}$ such that $g(q) \not= \wedge$.
The set $N_2$ contains all other atoms.
%parties
\item The set of parties of each atom is given by 
\[P_{n_0} = P_{n_f} = A\]
\[P_{n_{q,\alpha,k}} = \{I,P, C_k\} \text{ for every } n_{q,\alpha, k} \in N\]
\[P_{n_{\alpha,k}} = \{P, C_k\} \text{ for every } n_{\alpha, k} \in N\]
\[P_{n_{q,k}} = \{I,P\} \text{ for every } n_{q, k} \in N\]
%outcomes
\item The set of outcomes of each atom is defined as follows: $R_{n_0} = \{\text{st}\}$, \\
$R_{n_f} = \{\text{end}\}$, and
\[R_{n_{q,\alpha,k}} = \begin{cases}
\{\text{f}\} & q = q_{accept} \\
\{\text{s}\} & q = q_{reject}  \\
\{\r_\tau : \tau \in \delta(q, \alpha)\} & otherwise
\end{cases} \qquad\text{ for every } n_{q,\alpha, k} \in N\] 
\[R_{n_{q,k}} = \{\r_\alpha : \alpha \in \Gamma\} \text{ for every } n_{q,k} \in N\]
\[R_{n_{\alpha,k}} = \{\r_q : q \in Q\} \text{ for every } n_{\alpha, k} \in N\]
%movement of agent P
\item The $\trans(\_, P, \_)$ function for agent $P$ is given by:
\[\trans(n_0,P,\text{st}) = \{n_{\alpha, 0} : \alpha\text{ is the initial content of cell 0} \}\] 
\[\trans(n_{q,\alpha,k},P,\r_\tau) = \begin{cases}
\{n_{q', k-1} \} & \text{if }\tau = (q',\_, L)\\
\{n_{q', k+1} \} & \text{if }\tau = (q',\_, R)
\end{cases}\]
\[\trans(n_{q, k},P,\r_\alpha) = \{n_{\alpha, k}\}\]
\[\trans(n_{\alpha, k},P,\r_q) = \{n_{q,\alpha, k}\}\]
\[\trans(n_{q_{accept},\alpha,k},P,\text{f}) = \{n_f\}\]
\[\trans(n_{q_{reject},\alpha,k},P,\text{s}) = \{n_{q_{reject},\alpha,k}\}\]
%movement of agent I
\item The $\trans(\_, I, \_)$ function for agent $I$ is given by:
\[\trans(n_0,I,\text{st}) = \{n_{q_0,\alpha, 0} : \alpha\text{ is the initial content of cell 0} \}\]
\[\trans(n_{q,\alpha,k},I,\r_\tau) = \begin{cases}
\{n_{q', k-1} \} & \text{if }\tau = (q',\_, L)\\
\{n_{q', k+1} \} & \text{if }\tau = (q',\_, R)
\end{cases}\]
\[\trans(n_{q,k},I,\r_\alpha) = \{n_{q,\alpha, k}\}\]
\[\trans(n_{q_{accept},\alpha,k},I,\text{f}) = \{n_f\}\]
\[\trans(n_{q_{reject},\alpha,k},I,\text{s}) = \{n_{q_{reject},\alpha,k}\}\]
%
%movement of agents C_k
\item Finally, the $\trans(\_, C_k, \_)$ function for agent $C_k$ is given by
\[\trans(n_0,C_k,\text{st}) = \{n_{\alpha, k} : \alpha\text{ is the initial content of cell k} \}\]
\[\trans(n_{q,\alpha,k},C_k,\r_\tau) = \{n_{\beta, k} \} \text{ if }\tau = (\_,\beta, \_)\]
\[\trans(n_{\alpha, k},C_k,\r_q) = \{n_{q,\alpha, k}\}\]
\[\trans(n_{q_{accept},\alpha,k},C_k,\text{f}) = \{n_f\}\] 
\[\trans(n_{q_{reject},\alpha,k},C_k,\text{s}) = \{n_{q_{reject},\alpha,k}\}\]
\end{itemize}
\end{proof}

\subsection{Proofs of section \ref{sec:soundDet}}

In the paper we have shortened the proof of the only-if part of Theorem \ref{thm:attract}
due to space constraints. Here is the full text.

\begin{reftheorem}{thm:attract}
Let $\N$ be a sound, weakly deterministic negotiation arena. Player 1 has a winning strategy in the 
termination game iff $n_0 \in \mathcal{A}$.
\end{reftheorem}

\begin{proof}
We start with an observation: If all deterministic agents are ready to take part in $n_f$, 
then $n_f$ and only $n_f$ is enabled. 
Indeed, since deterministic agent are only ready to engage in at most one atom, the only atom
with a deterministic party that can be enabled is $n_f$. Moreover, by weak determinism every atom has 
a deterministic party, and so no atom other than $n_f$ can be enabled. Finally, by soundness at least
one atom is enabled, and so $n_f$ is the only enabled atom. 

\noindent ($\Leftarrow)$: Assume that $n_0 \in \mathcal{A}$. We fix the \emph{attractor strategy} for Player 1.
We define the attractor index of an atom $n \in \mathcal{A}$ as the smallest $k$ such that 
$n \in \mathcal{A}_k$. The strategy for an atom $n \in N_1 \cap \mathcal{A}$ 
is to choose any outcome such that all deterministic parties of $n$ move to an atom 
of smaller attractor index; formally, we choose any outcome $r$ such that 
for every deterministic party $a$ the singular atom in $\trans(n,a,r)$ has smaller attractor index than $n$. 
Such an outcome exists by construction of $\mathcal{A}$. For atoms $n \in N_1 \setminus \mathcal{A}$ 
we choose an arbitrary atom. Notice that this strategy is not only memoryless,
but also independent of the current marking. 

We show that the attractor strategy is winning. By the definition of the game, 
we have to prove that every play following the strategy ends with $n_f$ occurring.
By the observation above, it suffices to prove that the play reaches a marking at which
every deterministic agent is ready to engage in $n_f$. 

Assume there is a play $\pi$ where Player 1 plays according to the attractor strategy, which 
never reaches such a marking. Then the play never reaches the final marking $\vx_f$ either.
We claim that at all markings reached along $\pi$, the deterministic agents are only
ready to engage in atoms of $\mathcal{A}$. We first observe that, initially, all deterministic agents 
are ready to engage in $n_0$, and $n_0 \in \mathcal{A}$. Now,
assume that in some marking reached along $\pi$ the deterministic agents are only
ready to engage in atoms of $\mathcal{A}$. Then, by weak determinism, all enabled atoms
belong to $\mathcal{A}$, and therefore also the atoms chosen by Scheduler.
By the definition of $\mathcal{A}$, after firing atoms of $N_2 \cap \mathcal{A}$ the
deterministic agents are ready to engage in atoms of $\mathcal{A}$ only; by the definition
of the attractor strategy, the same holds atoms of $N_1 \cap \mathcal{A}$. This concludes 
the proof of the claim.

Since all markings $\vx$ reached along $\pi$ satisfy $\vx \neq \vx_f$ and 
$\vx(a) \in \mathcal{A}$ for every deterministic agent $a$, they all
have an associated attractor position vector whose components are natural numbers.
Let $P_k$ denote the position vector of the marking reached after $k \geq 0$ steps
in $\pi$. Initially only the initial atom $n_0$ is enabled, and so 
$P_0 = (k_0, k_0, \ldots, k_0)$, where $k_0$ is the attractor position of $n_0$. 
We have $k_0 < |N|$, the number of atoms of the negotiation arena $\N$. 
Given two position vectors $P= (p_1, ..., p_k)$, $P'=(p_1', ..., p_k')$, we say $P \prec P'$ if
$p_i \leq p_i'$ for every $1 \leq i \leq k$, and $p_i < p_i'$ for at least one $1 \leq i \leq k$.
By the definition of the attractor strategy, the sequence $P_0, P_1, ...$ of attractor positions 
satisfies $P_0 \succ P_1 \succ P_2 \ldots$. Since $\prec$ is a well-founded order, 
the sequence is finite, i.e., the game terminates. By the definition of the game, it terminates
at a marking that does not enable any atom. Since, by assumption, 
the play never reaches the final marking $\vx_f$, this marking is a deadlock, which contradicts the 
soundness of $\N$.

\noindent ($\Rightarrow$): Let $\mathcal{B} = N \setminus  \mathcal{A}$, and 
assume $n_0 \in \mathcal{B}$. 
We give a winning strategy for Player 2. The strategy for an atom $n \in N_2 \cap \mathcal{B}$ 
is to choose any outcome $\r$ such that at least one deterministic agent
moves to an atom not in $\mathcal{A}$, i.e., such that $\trans(n,a,\r) \notin \mathcal{A}$ for at least
one deterministic agent $a$. Such an outcome exists because, since $n \in \mathcal{B}$, 
we have $n \notin \mathcal{A}_k$ for every $k$, and so by definition and monotonicity of $\mathcal{A}_k$ there exists $\r \in R_n$ such that 
$\mathcal{X}(n,a,r) \notin \mathcal{A}_k$
for some deterministic agent $a$ and every $k$. For atoms in $N_2 \setminus \mathcal{B}$
we chose an arbitrary outcome.

We show that this strategy is winning for Player 2. Once again, because the negotiation is sound, 
no play played according to this strategy ends in a deadlock. So we have to prove that
every game played following the strategy never ends. By the observation above, it suffices to 
prove that the play never reaches a marking at which every deterministic agent is ready to engage in $n_f$.
Further, since $n_f \in \mathcal{A}$, it suffices to show  that for every marking $\vx$ reached along the 
play there is a deterministic agent $a$ such that $\vx(a) \in \mathcal{B}$.

Initially all deterministic agents are only ready to engage in $n_0$, and $n_0 \in \mathcal{B}$.
Now, assume that at some marking $\vx$ reached along $\pi$ there is a
deterministic agent $a$ that satisfies $\vx(a)=n$ and $n \in \mathcal{B}$. We prove that the same holds
for the marking $\vx'$ reached after one step of the play. If $n$ is not enabled
at $\vx$, then we have $\vx'(a) = \vx(a) \in \mathcal{B}$,
and we are done. The same holds if $n$ is enabled at $\vx$, but is not selected by the Scheduler. If
$n$ is enabled and selected by the Scheduler, there are two possible cases. If $n \in N_1$ 
then by the definition of $\mathcal{A}$ for every outcome $\r$ of $n$ there is a deterministic agent $a$ such that 
$\mathcal{X}(n_0,a,\r) \in \mathcal{B}$, and so $\vx'(a) \in \mathcal{B}$. If $n \in N_2$,
then by definition the strategy chooses an outcome $r$ for which there is an agent $a$ such that 
$\mathcal{X}(n_0,a,\r) \in \mathcal{B}$, and we again get $\vx'(a) \in \mathcal{B}$.
\end{proof}

\begin{refcorollary}{cor:poly}
For the termination game over sound and weakly deterministic negotiations, the following holds:
\begin{enumerate}[(a)]
\item The game collapses to a two-player game.
\item Memoryless strategies suffice for both players.
\item The winner and the winning strategy can be computed in $O(|\outc| \ast |\agents|)$ time, where 
$A$ is the set of agents and $|\outc|$ the total number of outcomes of the negotiation.
\end{enumerate}
\end{refcorollary}

\begin{proof}
(a) The attractor computation and the strategies used in the proof of Theorem \ref{thm:attract} are unrelated to the actions of the Scheduler, his choices have no influence on the outcome. \\
(b) Both strategies used in the proof of Theorem \ref{thm:attract} are memoryless. \\
(c) We describe an algorithm that computes the set $\mathcal{A}$ and the winning strategy. 
\lstset{language=C++,tabsize=4,escapeinside={(*@}{@*)},morekeywords={array,set,choose,remove,every,from,where,some}}
\begin{lstlisting}[mathescape]
array[][] count
array[] outcomes		
array[] strategy
set border
set $\mathcal{A}$
for every atom $n$
	outcomes[$n$] $\leftarrow$ |$\outc_n$|
	for every outcome $\r\in\outc_n$
		count[$n$][$\r$] $\leftarrow$ |$P_n$|
$\mathcal{A} \leftarrow \{n_f\}$
border $\leftarrow \{n_f\}$
outcomes[$n_f$] $\leftarrow 0$
for every $\r \in \outc_{n_f}$
	count[$n_f$][$\r$] $\leftarrow 0$
while border $\neq \emptyset$
	choose and remove an atom $n$ from border
	for every atom $n' \in \N$ where $\trans(n',a,\r) = n$ for some $a\in P_n', \r\in\outc_n'$(*@\label{forLoop}@*) 
		count[$n'$][$\r$]$--$
		if count[$n'$][$\r$] = 0 and $n'\in \N_1$
			outcomes[$n'$] $\leftarrow 0$
			strategy[$n'$] $\leftarrow \r$
			border $\leftarrow$ border $\cup$ $\{n'\}$
		if count[$n'$][$\r$] = 0 and $n'\in \N_2$
			outcomes[$n'$]$--$
			if outcomes[$n'$] = 0
				border $\leftarrow$ border $\cup$ $\{n'\}$
return $\mathcal{A}$
\end{lstlisting}
Intuitively, $count[n][\r]$ counts the ports of atom $n$ that do not reach the attractor with outcome $\r$, $outcomes[n]$ counts the outcomes of $n$ for which $count[n][\r] \not= 0$ or is zero if $n$ is in the attractor.

This algorithm terminates and computes $\mathcal{A}$, the check $n_0 \in\mathcal{A}$ is done by checking whether $outcomes[n_0]$ is zero. Furthermore, the strategy for Player 1 is computed in $strategy[n]$. Every outcome is inspected at most once per agent, thus the running time is at most $O(|R| \ast |A|)$. This leads to a total running time of at most $O(|R| \ast |A| + |N|) \in O(|R| \ast |A|)$. Computing the strategy for Player 2 can be done in a straightforward loop over all outcomes choosing any outcome for which $count[n][\r] \not= 0$ in running time $O(R)$.
\end{proof}

\begin{reftheorem}{thm:hard3}
The termination problem is EXPTIME-hard for sound negotiation arenas.
\end{reftheorem}

\begin{proof}
Recall that a negotiation is sound if every atom occurs in at least one occurrence sequence,
and every occurrence sequence can be extended to a sequence that ends with the final atom $n_f$ occurring.
There are two reasons why the negotiation $\N_{M,x}$ defined in  the proof of Theorem \ref{thm:hard} 
is unsound in general:

\begin{itemize}
\item If $M$ does not accept, then the final atom $n_f$ is not reachable.
\item If the head can never reach cell $k$ with symbol $\alpha$ in it, then the atom
$n_{q,\alpha,k}$ never occurs in any occurrence sequence.
\end{itemize}

To solve these problems, we modify the construction of Theorem \ref{thm:hard}.
The negotiation of Theorem \ref{thm:hard}, which we now call
the old negotiation, just simulates a run of the machine.
The new negotiation also does that in a first stage, but it also exhibits additional behavior. 
Intuitively, if the run rejects, then the new negotiation 
enters a ``anything is possible''- mode, or \emph{anything-mode} for short: in the 
negotiation any sequence of non-final atom may occur, or the 
final atom may occur and the negotiation will terminate. Furthermore, if the run accepts, then in the new negotiation either $n_f$ may occur directly, or the negotiation enters the anything-mode. 

The visual representation of the construction (with a simplified representation of the TM) can be found in Figure \ref{fig:soundExp}. Dashed edges always mean that all agents $C_k$ (in the case of $n_1$) resp. $I$, $P$ and $C_k$ (in the case of $n_2$) have nondeterministic edges for every outcome to that port/from that port to all other ports. \\
\begin{figure}[ht]
\centering
\input{tikz/TM_reduction.tex}
\caption{Constructing a sound negotiation for the TM reduction. $C$ means all agents $C_k$}
\label{fig:soundExp}
\end{figure}

Now we can argue why the new negotiation is sound. Since, by assumption, 
every run of the machine either accepts or rejects, the negotiation can
always reach the anything-mode, which allows any atom to occur, including
the final atom. 

However, we also have to guarantee that Player 1 has a winning strategy in the new negotiation
if{}f the machine accepts. For this, we ensure two things. First, when the run accepts,
the choice between firing $n_f$ and entering the anything-mode is
taken by Player 1. The winning strategy for Player 1 is to choose $n_f$. Second, in the
anything-mode the choice whether the final atom or some other
atom occurs is controlled by Player 2; therefore, if the
anything-mode is entered after a reject run, Player 2 wins by 
never selecting the final
atom. 

We implement this idea by introducing a new agent $S$ and two new atoms $n_1, n_2$. 
All agents, including $S$, participate in both $n_1$ and $n_2$, but $n_0, n_1, n_2, n_f$
are the only atoms in which $S$ participates. After $n_0$ occurs, agent 
$S$ is ready to engage in both $n_f$ and $n_1$. Firing $n_1$ signals the start of the
anything-mode. After firing $n_1$, all agents move to $n_2$.  Atom $n_2$ has two outcomes,
$s$ and $f$: after the outcome $f$ all agents move to $n_f$, and so by choosing this
outcome the negotiation terminates. After the outcome $s$, all agents but $S$ are ready to
engage in any atom, while $S$ is only ready to engage in $n_2$ again. So this outcome
corresponds to choosing any non-final atom of the old negotiation and firing it. 
But we still have to guarantee that after this atom occurs, atom $n_2$ is enabled
again. For this, we add $n_2$ as $n_2$ to $\trans(n,a,\r)$ for every atom $n \neq \{n_1, n_2, n_f\}$, 
agent $a \neq S$, and outcome $\r$.

To ensure that the anything-mode is entered if the run of the Turing machine rejects,
atoms of the form $n_{q_{reject}, \alpha, k}$ are given one single outcome, 
and the $\trans()$ function is modified as follows: $P$ and $I$ move to $n_1$, and 
we add $n_1$ to the set $\trans(n, C_k, \r)$ for every atom $n$ and outcome $\r$. So
after $n_{q_{reject}, \alpha, k}$ occurs $n_1$ is the only atom enabled.

Now we ensure that if the run of the Turing machine accepts, then Player 1 can decide whether
to terminate or enter the anything-mode. For this
we let atoms of the form $n_{q_{accept}, \alpha, k}$ to be controlled by Player 1, and
add to them a second outcome that enables $n_1$ (additionally to the one enabling $n_f$).

Formally, here are the changes to the construction of Theorem \ref{thm:hard}:

\begin{itemize}
\item $S$ is added to the set of agents: $A' = A \cup \{S\}$
\item $n_1,n_2$, controlled by Player 2 are added to the atoms: $N' = N \cup \{n_1, n_2\}$ and $N_2' = N_2 \cup \{n_1, n_2\}$
\item All agents participate in $n_1, n_2$ (and $S$ in $n_0, n_f$): 
\[P'_{n_0} = P'_{n_f} = P'_{n_1} = P'_{n_2} = A'\]
\item The outcomes for $n_1$ and $n_2$ are defined by: 
$R'_{n_1} = \{\text{s}\}$,
$R'_{n_2} = \{\text{s,f}\}$
\item All atoms $n_{q_{accept},\alpha,k}$ get a second outcome: $R'_{n_{q_{accept},\alpha,k}} = \{\text{s,f}\}$
\item The $\trans(n_1,\_,\_)$ function for atom $n_1$ is given by 
\[\trans'(n_1,a,s) = \{n_2\} \text{ for all } a \in P_{n_1}\]
\item The $\trans(n_2,\_,\_)$ function for atom $n_2$ is given by 
\[\trans'(n_2,a,\text{s}) = \{n \in N'\backslash\{n_f\} : a \in P_n\} \text{ for all } a \in P_{n_2} \backslash \{S\}\]
\[\trans'(n_2,a,\text{f}) = \{n_f\} \text{ for all } a \in P_{n_2}\]
\item The $\trans(\_, S, \_)$ function for agent $S$ is given by:
\[\trans'(n_0,S,\text{st}) = \{n_f,n_1\}\]
\[\trans'(n_2,S,\text{s}) = \{n_2\}\]
\[\trans'(n_2,S,\text{f}) = \{n_2\}\]
\item The $\trans(n,\_,\_)$ function for all accepting and rejecting atoms $n_{q_{accept}, \alpha, k},n_{q_{reject}, \alpha, k}$ is modified to:
\[\trans'(n_{q_{accept},\alpha,k},a,\text{s}) = \{n_1\} \text{ for all } a \in P'_{n_{q_{accept},\alpha,k}}\]
\[\trans'(n_{q_{accept},\alpha,k},a,\text{f}) = \{n_f\} \text{ for all } a \in P'_{n_{q_{accept},\alpha,k}}\]
\[\trans'(n_{q_{reject},\alpha,k},a,\text{s}) = \{n_1\} \text{ for all } a \in P'_{n_{q_{reject},\alpha,k}}\]
\item And finally, all agents except $S$ are ready to take part in $n_2$ at any time:
\[\trans'(n,a,\r) = \trans'(n,a,\r) \cup \{n_2\} \text{ for all } n \in n \in N'\backslash\{n_f\}, a \in P'_n \backslash\{S\}, \r \in R'_n\]
\end{itemize}

Now all atoms are enabled after some sequence, the end atom can always be reached, and $P_1$ can enforce the end atom iff $M$ terminates in $q_{accept}$.

\end{proof}

\end{document}